\documentclass[aps,pra,reprint,superscriptaddress]{revtex4-2}
\usepackage{tikz}
\usetikzlibrary{calc}
\usepackage{makecell}
\usepackage{array}
\usepackage[english]{babel}
\usepackage[utf8]{inputenc}
\usepackage{indentfirst}
\usepackage{amsmath}
\usepackage{amssymb}
\usepackage{amsfonts}
\usepackage{bbold}
\usepackage{amsthm}
\usepackage{physics}

\usepackage[colorlinks,urlcolor=blue,citecolor=blue,linkcolor=blue]{hyperref}

\usepackage{xcolor,graphicx}
\usepackage{dsfont}

\usepackage{subfigure}


\newcommand{\complex}{\mathbb{C}}

\newcommand{\bea}{\begin{eqnarray}}
\newcommand{\eea}{\end{eqnarray}}
\newcommand{\bel}[1]{\begin{eqnarray}\label{#1}}
\newcommand{\eel}{\end{eqnarray}}


\newtheorem{thm}{Theorem}

\newtheorem{lem}[thm]{Lemma}
\newtheorem{prop}[thm]{Proposition}
\newtheorem{cor}[thm]{Corollary}
\newtheorem{obs}[thm]{Observation}
\newtheorem{defn}{Definition}


\newcommand*{\myref}[2]{\hyperref[{#1}]{#2}} 
\newcommand{\id}{\mathds{I}} 
\newcommand{\ra}{\rangle}
\newcommand{\la}{\langle}

\newcommand{\elte}{\affiliation{MTA-ELTE ``Momentum'' Integrable Quantum Dynamics Research Group,\\ ELTE E\"otv\"os Lor\'and University \\ Pázmány P\'eter s\'et\'any 1/A, H-1117 Budapest, Hungary}}
\newcommand{\uj}{\affiliation{Institute of Theoretical Physics, Faculty of Physics, Astronomy and Applied Computer Science, Jagiellonian University \\ Ul. Łojasiewicza 11,
30-348 Krak\'ow, Poland}}
\newcommand{\ujphd}{\affiliation{Doctoral School of Exact and Natural Sciences, Jagiellonian University \\ Ul. Łojasiewicza 11, 30-348 Krak\'ow, Poland}}
\newcommand{\pancft}{\affiliation{Center for Theoretical Physics (CFT), Polish Academy of Sciences \\  Al. Lotnik{\'o}w 32/46, 02-668 Warszawa, Poland}}
\newcommand{\obuda}{\affiliation{\'Obuda University, Institute of Applied Mathematics, Bécsi út 96/B, H-1034, Budapest, Hungary
  }}


\begin{document}

\title{
  Graph restricted tensors:
building blocks for holographic networks
}
\author{Rafa{\'l} Bistro{\'n}}
\uj \ujphd
\author{M\'arton Mesty\'an}
\obuda \elte
\author{Bal\'azs Pozsgay}
\elte
\author{Karol \.Zyczkowski}
\uj \pancft

\date{\today}

\begin{abstract}
We analyze  few-body quantum states with particular correlation properties 
imposed by the requirement of maximal bipartite entanglement for selected partitions of the system into two
complementary parts. 
 A novel framework to treat this problem by encoding these constraints in a graph is advocated;
the resulting objects are called ``graph-restricted tensors''. This framework encompasses several examples previously treated in the literature, such as 1-uniform multipartite states, quantum states related 
to dual unitary operators  and absolutely maximally entangled states (AME) corresponding to 2-unitary matrices.
Original examples of presented graph-restricted tensors are motivated by tensor network models for the
 holographic principle. In concrete cases we find exact analytic solutions, demonstrating thereby that there exists
a vast landscape of non-stabilizer tensors useful for the lattice  models of holography.
\end{abstract}

\maketitle

\section{Introduction}

Tensor networks are ubiquitous in theoretical physics: they can be used to simulate quantum many-body systems in various
physical scenarios. Originally they were developed to describe ground states of local Hamiltonians with a mass-gap 
\cite{mps-intro-2020,tensor-network-review}, but these methods were later extended to describe
time-evolution as well \cite{mps-time-evolution-review}. Tensor networks found various applications in quantum
computing \cite{tensor-network-qcomp}.
A special application of tensor networks is that of the {\it holographic error correcting
  codes} \cite{holocode-review-jahn-eisert}. These tensor networks  can be used as lattice models of the
holographic principle, and also as quantum error correcting codes. 
The first example that appeared in the literature was the HaPPY code
\cite{ads-code-1}. It is a tensor network constructed from so-called perfect tensors, and it serves as a toy model of the famous
AdS/CFT correspondence \cite{Maldacena-ads-cft, Witten1998}. 

A special class of tensor networks are those which are solvable in the sense that selected physical quantities can be
computed analytically, or numerically with a small computational effort. Solvability can be established if the
individual tensors satisfy special constraints. The actual constraints depend on the physical scenario, and various
examples have already been established in the literature. 

As a first example we mention  {\it brickwork quantum circuits}: they can be understood as a special tensor network,
describing, among others, the evolution of one dimensional spin chains, 
which are solvable if the quantum gates are {\it dual unitary}   \cite{dual-unitary-review}. Analogously, there are
solvable tensor networks describing quantum states of two dimensional spin models \cite{du-peps}. Finally, the
holographic tensor networks that were studied in the literature are also solvable, because their two-point functions can
be computed \cite{holocode-review-jahn-eisert}.

In all of these examples the solvability of the tensor networks followed from special properties of the fundamental
tensors, namely that they could be interpreted as unitary or isometric linear operators for multiple arrangements of the
tensor indices \cite{sajat-multi}. Alternatively, if the tensors are viewed as multi-party quantum states, then the
solvability conditions can be translated into requirements for maximal entanglement for selected bipartitions of the
sites (tensor indices) \cite{sajat-multi}.

A special case is when there is maximal entanglement for {\it all bipartitions}; in such a case the state is called an
Absolutely Maximally Entangled (AME) state \cite{ame-review}. If one builds a tensor network from an AME states, then the
resulting network often oversimplifies while calculating interesting physical phenomena, for example two-point correlation
functions become trivial. This is the case for the dual unitary circuits  \cite{dual-unitary--bernoulli} and holographic networks \cite{ChunJun2022}. Such an observation leads to the idea of keeping the requirements of
maximal entanglement for selected bipartitions only
\cite{perfect-tangles,block-perfect-tensor,planar-OA,planar-AME,sajat-multi}. That 
way one can generate non-trivial correlation functions even in the solvable models.

In the case of holographic tensor networks, closely related alternative methods also emerged. 
Evenbly introduced the concept of hyper-invariant tensor networks (HTN), tailored to simulate AdS/CFT correspondence 
\cite{Evenbly-HyMera}. This construction uses two constituent tensors satisfying special unitarity and isometry requirements.
The resulting tensor network is expected to simulate conformal field theories at its boundary, and 
various physical quantities of
these conformal field theories (the central charges  
$c$ and scaling dimensions of primary fields $\Delta$) were also computed \cite{ChunJun2022, holocode-review-jahn-eisert}.
Later on this research field has developed into many directions \cite{Hayden:2016cfa, Qi:2018shh,
  Bao:2018pvs, Jahn2019_2, Jahn2022, Jahn2022_1,Chen:2024unp}. Recently, a construction based on dual unitary operators was advocated in
\cite{rafal-karol-mykhailo--holocode}.

The main aim of this work is to introduce a novel {representation} for tensors (quantum states), which describes the isometry/unitarity
requirements by means of graph theoretical tools. We call the resulting objects ``graph-restricted tensors'

Afterwards we
return to the lattice models of the holographic principle. We start with the requirements of exact solvability for
different regular tilings of the Poincar\'e disk. These constraints for the tensors allow for non-trivial
and hitherto previously unknown solutions. This approach leads to multiple families of tensors with tunable
parameters. Intermediate computations become even simpler than in the previous approaches \cite{rafal-karol-mykhailo--holocode},
because the proposed
examples often have smaller bond dimension.

\section{Graph-constrained tensors}
\label{sec:gct}

In this Section we present the framework of graph-constrained tensors. The framework advocated in this work covers, in
special cases, 
several examples studied earlier in the literature literature.
{The graphs that we introduce will serve as representations of the entanglement properties of particular quantum states.}

Let us first recall two standard definitions of graph theory.

\begin{defn}
  A \emph{graph} is a pair $G = (V,E)$, where $V = \{v_i\}_{i = 1}^n$ is a set of  vertices, and $E$ is a subset of $V
  \cross V$. The elements in $E$ are called edges, and we regard them as un-ordered pairs. The edge $\{v_i, v_j\}$ is
  denoted by $e_{ij}$.
\end{defn}

\begin{defn}
  A \emph{clique} of a graph $G=(V,E)$ is a subset of vertices $C \subset V$ such that for any two vertices $v_i,v_j \in C$, there exists an edge connecting them $e_{ij} \in E$. 
\end{defn}

In order to introduce graph-constrained tensors, we  also use tensor-related notations.
Let $T_{s_1,\cdots,s_n}$ denote a complex-valued tensor of order $n$ with all local dimensions equal to $d$. This tensor can be interpreted as the list of coefficients of an unnormalized quantum state $|\psi_T\ra \in ({\mathbb C}^d)^{\otimes n}$, given by
\begin{equation}
|\psi_T\ra = \sum_{i_1\cdots i_n}T_{i_1,\cdots,i_n} |i_1,\cdots,i_n\ra~.
\label{eq:psiT}
\end{equation}

Alternatively, the tensor $T$ can be interpreted as the list of matrix elements of an operator. Let us divide the labels of the indices $\{1,...,n\}$ into two arbitrary complementary subsets with 
\begin{equation}
  \begin{aligned}
    S &\equiv \{k_1,\cdots,k_m\},  \\
    \overline S &= \{1,...,n\} \setminus S \equiv \{ l_1,\cdots,l_{m'} \}\,.
  \end{aligned}
  \label{eq:bip}
\end{equation}
Then the tensor $T$ can be identified with an operator $V_{T}: (\complex^{d})^{\otimes m'} \rightarrow (\complex^{d})^{\otimes m}$,
given by
\begin{equation}
V_T = \sum_{i_1\cdots i_n}T_{i_1,\cdots,i_n} |i_{k_1},\cdots,i_{k_m}\ra\la i_{l_1}, \cdots, i_{l_{m'}}|.
\label{eq:VT}
\end{equation}
In the rest of this article, we will consider several bipartitions of indices for the same tensor. For a generic bipartition of indices \eqref{eq:bip} we introduce the concise notation $(l_1,...,l_{m'})\rightarrow (k_1,...k_m)$.

Given a tensor $T$ and a bipartition of its indices $(l_1,...,l_{m'}) \rightarrow (k_1,...,k_m) $, we define the
reduction of $T$ with respect to this bipartition as
\begin{equation}
  \rho^{j_{k_1},...,j_{k_m}}_{i_{k_1},...,i_{k_{m}}}\!\! \equiv \!\!\! \sum_{\substack{i_{l_1},\cdots,i_{l_{m'}} \\ j_{l_1},\dots, j_{l_{m'}} }} \!\!\!\!\! \delta_{i_{l_1}}^{j_{l_1}}\cdots \delta_{i_{l_{m'}}}^{j_{l_{m'}}}  T_{j_1,\cdots,j_n} (T_{i_1,\cdots, i_n})^*  \,.
\end{equation}
Similarly to the two ways of interpreting $T$, there are two ways to interpret $\rho$ as well. On one hand, it can be considered as the matrix of the reduced density operator $\text{Tr}_{l_{1},...,l_{m'}} |\psi_T\ra \la \psi_T|$ corresponding to the state \eqref{eq:psiT}. On the other hand, it can be considered as the matrix of the operator $V_T V_T^{\dagger}$, with $V_{T}$ given in \eqref{eq:VT}. 

Now we are prepared to define graph-constrained tensors.

\begin{defn}\label{def:graph_res}
  Let $G=(V,E)$ be a graph with $n$ vertices $V=\{v_{1},...,v_{n}\}$. A tensor $T_{i_1,\cdots,i_n}$  or order $n$ is constrained by the graph $G$ if for any bipartition $(l_{1},...,l_{m'}) \rightarrow (k_1,...,k_{m})$ of the tensor indices with $m \le n/2$, for which the set of vertices $\{v_{k_1},\cdots,v_{k_m}\}$ is a clique of $G$, the reduction with respect to the bipartition is proportional to identity, i.e.,
\begin{equation}
\label{eq_start}
\rho_{i_{k_1},...,i_{k_m}}^{j_{k_1},...,j_{k_m}} \propto \delta_{i_{k_1}}^{j_{k_1}}  \cdots \delta_{i_{k_m}}^{j_{k_m}}\,. 
\end{equation}
Furthermore, if for any bipartition of $T$ satisfying \eqref{eq_start}, there exists a corresponding clique within graph $G$, $T$ is said to be faithfully constrained by graph $G$.
\end{defn}

In the quantum state interpretation of $T$, the equation \eqref{eq_start} states that the reduced density matrix  is maximally mixed $\Tr_{i_{l_1},...,l_{m'}} |\psi_T\ra\la\psi_T| \propto \id$. In the operator interpretation \eqref{eq:VT}, the condition \eqref{eq_start} is equivalent to $V_T V_T^{\dagger} \propto \id$, which means that $V_T$ is proportional to an isometry.

We note that the above definition is compatible with the fact that any subset
of a clique is also a clique. If the reduction of a tensor is proportional to identity, then it will remain proportional to identity after the contraction of further indices.

Having two graphs constrained tensors we may sometimes derive modest constraints for their contractions.

\begin{prop}
Let $T^{(1)}$ and $T^{(2)}$ be two tensors  constrained by graphs $G^{(1)}$, $G^{(2)}$. If $T^{(1)}$ and $T^{(2)}$ are contracted on some indices corresponding to one clique in each graph, $C^{(1)}, C^{(2)}$, then the resulting tensor is constrained by graph $G$ which is a disjoint union of $G^{(1)}$ and $G^{(2)}$ both with removed vertices corresponding to contracted indices. 
\end{prop}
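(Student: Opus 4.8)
The plan is to work in the operator picture of \eqref{eq:VT}: I will write the isometry $V_{T}$ of the contracted tensor as a composition of two operators obtained from $T^{(1)}$ and $T^{(2)}$, and then read off the reduction of $T$ from the two input constraints. Write $c=|C^{(1)}|=|C^{(2)}|$ (these must coincide for the contraction to be defined), set $R_{i}=V^{(i)}\setminus C^{(i)}$, so the contracted tensor $T$ has index set $R_{1}\sqcup R_{2}$, order $N:=|R_{1}|+|R_{2}|$, and $G$ is the disjoint union of $G^{(1)}$ and $G^{(2)}$ with the vertices of $C^{(1)}$, resp.\ $C^{(2)}$, deleted. Since a clique of a disconnected graph lies in one connected component, for any bipartition of $T$ whose kept set $K=\{v_{k_{1}},\dots,v_{k_{m}}\}$ is a clique of $G$ with $m\le N/2$ we may assume $K\subseteq R_{1}$; and $K$ is then also a clique of $G^{(1)}$, because deleting vertices produces an induced subgraph.

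The core step is the factorization. Reading $T^{(1)}$ through the bipartition $\bigl((R_{1}\setminus K)\cup C^{(1)}\bigr)\to K$ defines an operator $W_{1}:=V_{T^{(1)}}$, and reading $T^{(2)}$ through the bipartition $R_{2}\to C^{(2)}$ defines $W_{2}:=V_{T^{(2)}}$. Identifying the contracted legs, $\HH_{C^{(1)}}\cong\HH_{C^{(2)}}\cong(\complex^{d})^{\otimes c}$, a direct comparison of tensor entries gives $V_{T}=W_{1}\,(\id_{R_{1}\setminus K}\otimes W_{2})$, hence
\[
\rho^{(T)}_{K}=V_{T}V_{T}^{\dagger}=W_{1}\,(\id_{R_{1}\setminus K}\otimes W_{2}W_{2}^{\dagger})\,W_{1}^{\dagger}.
\]
Here $W_{2}W_{2}^{\dagger}$ is exactly the reduction of $T^{(2)}$ onto the clique $C^{(2)}$, which by Definition~\ref{def:graph_res} is proportional to $\id$; substituting this turns the parenthesis into the identity on the whole domain of $W_{1}$, so $\rho^{(T)}_{K}\propto W_{1}W_{1}^{\dagger}$. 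In turn $W_{1}W_{1}^{\dagger}$ is the reduction of $T^{(1)}$ onto the clique $K$, which by Definition~\ref{def:graph_res} is again proportional to $\id$. Multiplying the two proportionality constants yields \eqref{eq_start} for $T$, i.e.\ $T$ is constrained by $G$. The same bookkeeping can also be done entirely with reduced density matrices, by first showing $\Tr_{R_{2}}\ketbra{\psi_{T}}{\psi_{T}}\propto\rho^{(T^{(1)})}_{R_{1}}$ using the $C^{(2)}$-constraint and then tracing out $R_{1}\setminus K$.

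The step I expect to be the real obstacle is licensing the two invocations of Definition~\ref{def:graph_res}, namely keeping track of which side of each bipartition is the short one. Definition~\ref{def:graph_res} only forces the reduction onto a clique to be proportional to $\id$ when that clique carries at most half of the indices of the tensor concerned, so the argument genuinely requires $|C^{(2)}|\le|V^{(2)}|/2$ (to make $W_{2}W_{2}^{\dagger}\propto\id$) and $|K|\le|V^{(1)}|/2$ (to make $W_{1}W_{1}^{\dagger}\propto\id$). The first is a condition on the chosen contraction. The second is the delicate one: a clique of $G$ with $m\le N/2$ may still exceed $|V^{(1)}|/2$ vertices once $R_{2}$ is large, and then $W_{1}W_{1}^{\dagger}$ is a reduction onto the larger block of $T^{(1)}$, whose rank is at most $d^{\,|V^{(1)}|-m}<d^{\,m}$ and which cannot be proportional to $\id$ — gluing two three-qubit GHZ states along a single leg is the smallest instance. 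So the statement should be understood with the contraction performed on a clique occupying at most half of each graph, and with the inherited constraint asserted only for those cliques of $G$ that respect the same half-bound in their parent graph; this is the precise sense in which the resulting constraints are ``modest''.
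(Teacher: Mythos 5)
Your core argument is the same as the paper's: first use the clique constraint on $C^{(2)}$ to reduce $T^{(2)}$ against its conjugate into an identity joining the contracted legs of $T^{(1)}$ and its conjugate, then invoke the clique constraint of $G^{(1)}$ for the surviving clique $K$. Your factorization $V_T=W_1(\id\otimes W_2)$, giving $\rho^{(T)}_K\propto W_1W_1^\dagger$, is just a compact operator form of exactly that two-step reduction, and it is correct.

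Your closing caveat is also legitimate, and it touches a point the paper's proof leaves implicit: invoking Definition~\ref{def:graph_res} for $C^{(2)}$ requires $|C^{(2)}|\le n_2/2$, and invoking it for $K$ requires $|K|\le n_1/2$, whereas the conclusion as literally stated asks for \eqref{eq_start} for every clique of $G$ with $|K|\le N/2$; when the second tensor is large this can exceed $n_1/2$, and then $\rho^{(T)}_K$ is proportional to a rank-deficient reduction of $T^{(1)}$ and cannot be proportional to $\id$, so the proposition must be read with these size provisos (consistent with the paper's phrase about ``modest'' constraints). However, your ``smallest instance'' is not an instance of this failure: a three-qubit GHZ state is faithfully constrained by the empty graph, so after gluing one leg the union graph $G$ has only singleton cliques, the proposition asserts nothing about the $(2,2)$ bipartition, and indeed the glued tensor is the four-qubit GHZ state, which is $1$-uniform and satisfies the conclusion. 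A genuine failure requires $G^{(1)}$ to contain a clique of more than $n_1/2$ vertices disjoint from $C^{(1)}$; for example, take $T^{(1)}$ a four-qutrit AME tensor (complete graph $K_4$) contracted on one leg with a $1$-uniform state of at least four qutrits: the three remaining legs of $T^{(1)}$ form a triangle in $G$ with $3\le N/2$, yet the corresponding reduction of the contracted tensor has rank at most $3<27$, so it is not proportional to $\id$.
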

\begin{proof}
Let us contract the tensors $T^{(1)}$ and $T^{(2)}$ corresponding to some pairing of the indices in the cliques
$C^{(1)}$ and $C^{(2)}$. Let us furthermore select an additional clique in $G^{(1)}$, which is assumed to be have zero
overlap with $C^{(1)}$; we denote this additional clique by $D^{(1)}$. We could also select the additional clique in
$G^{(2)}$, in which case one has to repeat the argument below.

Let us now consider the reduction of the contracted
tensor with respect to $D^{(1)}$. As we compute the reduction, we may choose to contract the remaining indices in
$T^{(2)}$ and its complex conjugate. Due to the fact that $C^{(2)}$ is a clique, this first and partial step of the
reduction process yields an identity matrix for the indices belonging to $C^{(2)}$ in $T^{(2)}$ and its complex
conjugate. 
holds of course for both the tensors and their complex conjugates.
Having found the identity matrix in the previous step, we now recall that $T^{(1)}$ and $T^{(2)}$ were contracted on these indices,
so the identity obtained from $C^{(2)}$ connects the indices belonging to $C^{(1)}$ in
both $T^{(1)}$ and its complex conjugate. Finally we perform the reduction over those indices which do not belong to
either $C^{(1)}$ or $D^{(1)}$. Altogether we get the reduction over all indices which are not in $D^{(1)}$. Using the
fact that $D^{(1)}$ was originally a clique in $G^{(1)}$, this partial reduction also yields the identity matrix. With
this we proved that the partial reduction to $D^{(1)}$ yields maximal entanglement also in the contracted tensor.
\end{proof}

The notion of graph-constrained tensor is not to be confused with graph states \cite{AME-graph}. {Also, the edges of the
graph should not be confused with Bell pairs, which are also often denoted by connecting two points in similar
depictions of quantum states.}

In Table \ref{tab1} we present a few examples of well-known structures that can be described by this framework, and below we
discuss them in more details. Every example is a 4 index tensor, and the dimension of the individual spaces is not specified.

\begin{table*}[t]
  \begin{tabular}{|c||c|c|c|c|}
  \hline
Graph &
        \raisebox{-.5\height}{
        \begin{tikzpicture}[scale=0.5, every node/.style={font=\small}]
  \foreach \i in {1,...,4}{
   \coordinate (v\i) at ({225+(\i-1)*90}:2cm);
  }
   \foreach \i in {1,...,4}{
     \node[draw,fill=blue, circle, inner sep=2pt] at (v\i) {};
      \node at ($(v\i)+ ({225+(\i-1)*90}:0.6cm)$) {$\i$};
  }   
\end{tikzpicture}}
    &
      \raisebox{-.5\height}{
      \begin{tikzpicture}[scale=0.5, every node/.style={font=\small}]
  \foreach \i in {1,...,4}{
   \coordinate (v\i) at ({225+(\i-1)*90}:2cm);
 }
   \draw[thick] (v1) -- (v2);        \draw[thick] (v4) -- (v3);   
   \foreach \i in {1,...,4}{
     \node[draw,fill=blue, circle, inner sep=2pt] at (v\i) {};
    \node at ($(v\i)+ ({225+(\i-1)*90}:0.6cm)$) {$\i$};
  }   
\end{tikzpicture}    }
    &
      \raisebox{-.5\height}{
      \begin{tikzpicture}[scale=0.5, every node/.style={font=\small}]
  \foreach \i in {1,...,4}{
   \coordinate (v\i) at ({225+(\i-1)*90}:2cm);
 }
   \draw[thick] (v1) -- (v4);        \draw[thick] (v2) -- (v3);
       \draw[thick] (v1) -- (v2);        \draw[thick] (v4) -- (v3);   
   \foreach \i in {1,...,4}{
     \node[draw,fill=blue, circle, inner sep=2pt] at (v\i) {};
     \node at ($(v\i)+ ({225+(\i-1)*90}:0.6cm)$) {$\i$};
  }   
\end{tikzpicture}    }
    &
      \raisebox{-.5\height}{
      \begin{tikzpicture}[scale=0.5, every node/.style={font=\small}]
  \foreach \i in {1,...,4}{
   \coordinate (v\i) at ({225+(\i-1)*90}:2cm);
 }
   \draw[thick] (v1) -- (v4);        \draw[thick] (v2) -- (v3);
   \draw[thick] (v1) -- (v2);        \draw[thick] (v4) -- (v3);
   \draw[thick] (v1) -- (v3);        \draw[thick] (v2) -- (v4);
      \foreach \i in {1,...,4}{
     \node[draw,fill=blue, circle, inner sep=2pt] at (v\i) {};
     \node at ($(v\i)+ ({225+(\i-1)*90}:0.6cm)$) {$\i$};
   }
  \end{tikzpicture} }
      \\
 \hline
 State & 1 uniform & \makecell{maximal entanglement \\ for bipartition \\ $(1,2)\leftrightarrow  (3,4)$} &
       \makecell{maximal entanglement \\ for bipartitions\\ $(1,2)\leftrightarrow (3,4)$ and\\  $(1,4) \leftrightarrow (2,3)$}
        & \makecell{maximally entangled \\ for all bipartitions  \\ (AME)}\\
     \hline
 \makecell{Operator\\  interpretation}  & \makecell{$1\to(2,3,4)$ isometry}
        & unitary & dual-unitary & \makecell{two-unitary \\ (perfect tensor)} \\
 \hline
\end{tabular}
\caption{Graphs with four vertices, encoding graph constrained tensors. These correspond to 
  4-partite quantum states $|\psi_T\rangle \in H_d^{\otimes 4}$, and two-site operators, or bipartite matrices of order
  $d^2 \times d^2$. The cliques of the graphs are those subsets which are maximally entangled with the
  complement. In the first case only single sites form cliques, and we obtain 1-uniform states. In the second states
  there are two cliques: the pairs $(1,2)$ and $(3,4)$; accordingly we get maximal entanglement for the bipartition
  $(1,2)\cup(3,4)$, and a unitary operator acting as $(1,2)\to (3,4)$. In the third example we obtain the dual unitary
  operators, and in the last example of a full-graph -- the AME states or perfect tensors.} 
\label{tab1}
\end{table*}

The first non-trivial example is a $1$ uniform tensor, which is encoded by an empty graph. In this graph the only
cliques are those of the isolated sites, therefore the resulting conditions mean that
every local party is maximally entangled with the rest of the system.
A well known example for such a state is the GHZ state. In the case of tensor of order 4, the resulting isometry
conditions take the form
\begin{equation}
  \sum_{i_2 i_3 i_4} T_{i_1,i_2,i_3,i_4} (T_{i_1,i_2,i_3,i_4})^* =
\delta_{i_1, i_1'},
\end{equation}
together with 3 analogous equations, corresponding to the sites $2, 3, 4$.

The second example describes a state which has maximal entanglement for a single bipartition into two equal halves, in
this case $(1,2)\leftrightarrow (3,4)$. Interpreted as an operator acting from subsystem $(1,2)$ to subsystem $(3,4)$ we
obtain a unitary matrix. Therefore, the single condition that we get is
\begin{equation}
  \sum_{i_3,i_4} T_{i_1,i_2,i_3,i_4}
(T_{i_1',i_2',i_3, i_4})^* = \delta_{i_1,i_1'}\delta_{i_2,i_2'}
\end{equation}

The next example is a planar two-uniform tensor, also called
 block-perfect tensor
\cite{perfect-tangles,block-perfect-tensor,planar-AME}. The graph is given by a square, and the largest cliques are
given by the pairs of neighboring sites. Correspondingly, this state has maximal entanglement for two different
bipartitions, which cut the square into two halves parallel to its sides. Viewed as an operator these tensors are
dual-unitary \cite{dual-unitary-review}.

Our final example is the perfect tensor, which corresponds to an absolutely maximally entangled state \cite{AME-1}.
As an operator it was also called a two-unitary matrix \cite{AMEcomb2}, and it serves as an isometry for any grouping
of indices. The constraints are represented by the complete graph.

We note that the framework of graph-constrained tensors can not describe all possible situations of physical
relevance. A complete framework can be given by the so-called hypergraph-constrained tensor, which we introduce in
Appendix \ref{app:hyper}. Hypergraph-constrained tensors are a natural extension of the concept of graph-constrained
tensors. However, in most situations of physical interest the simpler representation by a single graph is sufficient.
 
 {In the rest of this work, we focus on two new types of graph-constrained tensors their minimal construction.
Next two section are devoted to study minimal constructions with two-dimensional local spaces.
A family of solutions with higher-dimensional local dimensions is presented in Appendix \ref{sec:n-gon}.}

\section{Planar pentagonal tensor}\label{sec:penta}

{In this Section we study a specific class of graph-constrained tensors.  Before going to the
technical details, we can summarize the problem in the following way: We intend to find all quantum states of 5 qubits,
such that the states have the geometrical symmetries of the pentagon, and each pair of neighboring qubits is maximally
entangled with the remaining 3 qubits. This problem is then formalized as follows.}

\begin{figure}[h]
  \centering
  \begin{tikzpicture}[scale=0.5, every node/.style={font=\small}]
  \foreach \i in {1,...,5}{
    \coordinate (v\i) at ({90-(\i-1)*72}:2cm);
  }
   \foreach \i/\j in {1/2,2/3,3/4,4/5,5/1}{
     \draw[thick] (v\i) -- (v\j);
     
   \foreach \i in {1,...,5}{
    \node[draw,fill=blue, circle, inner sep=2pt] at (v\i) {};
  }   
  }
\end{tikzpicture}
\caption{Pentagonal graph encoding constraints of planar 2-uniform tensor or order $5$ \cite{planar_k_uniform_states}.}
\label{fig:penta}
\end{figure}
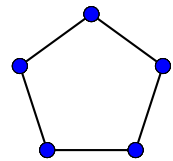

We are looking for tensors that are of order five, and we assume
that each subspace has dimension 2. The constraints are encoded by the pentagonal graph, see
Fig. \ref{fig:penta} and describe a planar 2-uniform tensor \cite{planar_k_uniform_states}.
In this graph the pairs of neighbors are the largest cliques, and the requirement implies that such tensors are
specific variants of ``planar perfect tensors'' which were called various names in previous literature \cite{perfect-tangles,block-perfect-tensor,planar-AME}.

We intend to respect the geometrical symmetries of the pentagon. Therefore, we are looking for a tensor $T$ of order 5 with
components  $T_{s_1,s_2,s_3,s_4,s_5}$  such that the tensor itself is invariant under the symmetry group of the
pentagon. Due the geometrical symmetries it is sufficient to formulate one constraint of maximal entanglement, which reads
\begin{equation}
  \label{rho12}
  (\rho_{12})_{s_1,s_2}^{s_1',s_{2}'} = \delta_{s_1}^{s_{1}'} \delta_{s_2}^{s_2'},
\end{equation}
where $\rho_{12}$ corresponds to the bipartition $(3,4,5)\rightarrow(1,2)$. At the same time, we do not require that
$\rho_{13}$ is proportional to the identity matrix. Therefore, we allow that the subsystem corresponding to sites $(1,3)$ is
not maximally entangled with its complement.

A rank 5 qubit tensor has a total number of $2^5=32$ components. However, the geometrical symmetries imply that there are only
8 independent ones. In the case of qubits the rotational symmetry is enough to narrow down the list of independent
components. They can be chosen as
\begin{equation}
  \begin{split}
    &    T_{00000}, T_{11111} ,    T_{00001}, T_{01111} ,  \\
    &T_{00011}, T_{00111} ,    T_{00101}, T_{01011}
  \end{split}
\end{equation}
One can directly check that every component not present in the above list can be identified with one of these elements,
by using rotational symmetry on the indices.

We computed the reduced density matrix $\rho_{12}$ as an expression of these 8 independent components, and enforced
condition \eqref{rho12}. For simplicity we restricted our attention to tensors with purely real components. 

By using program \texttt{Mathematica} \cite{MathematicaTensorsPaper} it was possible to find real solutions of eq. \eqref{rho12}.
Writing out the components we end up with $7$ equations for $8$ tensor coefficients. 

 By using reparametrisations we
determined that the algebraic variety of found solutions consists of two disjunct components. One of the components is a
two-parameter family of solutions, whereas the other component is a one-parameter family. 

Once a certain solution is found, one can still apply global $SU(2)$ transformations to obtain new
solutions. Since we restrict ourselves to tensors with real components, the only allowed $SU(2)$ rotations are those
where the representant is also purely real, and they are given by
\begin{equation}
\otimes_{j=1}^5 S\quad \text{with} \quad   S=\exp(i\phi Y),
\end{equation}
where $Y$ is the standard Pauli matrix corresponding to the $y$ direction. 

Using the symmetry transformation one can simplify the solutions even further. One of the components is found to be a one-parameter
family of tensors, and the other component is an isolated point.

We also computed the reduced density matrix $\rho_{13}$ for the solutions, which was not required to be an the identity matrix. However, specifically for the one-parameter family of solutions it was the case.
This means that the state is actually an $AME(5,2)$ with a free
parameter. We confirmed that the solution coincides with the one-parameter family of AME states found recently in
\cite{arul-AME5d} (see also \cite{tan-AME52}). {For completeness, we present our solution described by co-efficients}
\begin{equation}
  \begin{split}
    T_{00000}=T_{00101}=-T_{00011}=-T_{01111}&=\sin(\theta)\\
    T_{11111}=T_{01011}=-T_{00111}=-T_{00001}&=\cos(\theta),
  \end{split}
\end{equation}
where $\theta$ is a real parameter. 

In the case of the isolated point we also computed $\rho_{13}$ which differs from the identity matrix, implying that
this tensor is imperfect. Its components read
\begin{equation}
  \begin{aligned}
    T_{00000}&= T_{11111} + \frac 32 =    \frac{1}{4} \left(\sqrt{10 \sqrt{5}-22}+3\right)\,,\\
    T_{00001}&= -T_{01111} = -  \frac{1}{4} B\,,\\
    T_{00011}&= T_{00111} - \frac{B}{2} =  \frac{1}{4} \left(-B+2    \sqrt{B}\right)\,,\\
    T_{00101}&= T_{01011} + \frac{1}{2} =  \frac{1}{4} \left(1-\sqrt{2( \sqrt{5}-1)}\right)\,,\\
  \end{aligned}
  \label{eq:imp5}
\end{equation}
where $B\equiv\sqrt{5}-2$. The remaining coefficients are determined by rotational invariance.

The reductions have elegant form
\begin{equation}
  \rho_{12} = \mathbb 1_2\,,\quad \rho_{13} =
  \begin{pmatrix}
    \alpha & 0 & 0 & \beta \\
    0 & \gamma & \delta & 0 \\
    0 & \delta & \gamma & 0 \\
    \beta & 0 & 0 & \alpha
  \end{pmatrix}\,,
\end{equation}
with
\begin{equation}
  \begin{aligned}
  \alpha & = \! \frac{\sqrt{5} + 3}{4}\,,\quad
  \beta  =  \frac{\sqrt{5} - 1}{4}\,, \\
  \gamma  & =  \frac{5 - \sqrt{5}}{4}\,,\quad
  \delta  =  \frac{1 - \sqrt{5}}{4}\,.
\end{aligned}
\end{equation}

{Finally, in Appendix \ref{sec:n-gon} we present a general construction of such tensors with local dimension $d^2$,
where $d\geq 2$. Furtheremore, these tensors need not be pentagonal but with arbitrary number of local subsystems, $n \geq 4$.
}

\section{Planar hexagonal tensors}

\label{sec:hexagonal}

In this Section we treat a new family of graph constrained tensors. The constraints are motivated by application to the
holographic tensor networks described in the following Section.

The tensors in question have 7 indices, and the constraints are encoded by the graph from Fig. \ref{fig:hexa}. Here the 7 vertices
are arranged such that 6 of them form the corners of a regular hexagon, and one additional vertex is placed to the
center. The largest cliques in the graph correspond to triangles: three-element subsets consisting of two neighboring
corners and the central vertex. 

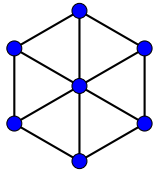
\begin{figure}[h]
  \centering
  \begin{tikzpicture}[scale=0.5, every node/.style={font=\small}]
  \foreach \i in {1,...,6}{
    \coordinate (v\i) at ({90-(\i-1)*60}:2cm);
  }
  \coordinate (v7) at (0:0cm);
  \foreach \i/\j in {1/2,2/3,3/4,4/5,5/6,6/1}{
    \draw[thick] (v\i) -- (v\j);
    \draw[thick] (v\i) -- (0:0cm);
  }
  \foreach \i in {1,...,7}{
    \node[draw,fill=blue, circle, inner sep=2pt] at (v\i) {};
  }  
\end{tikzpicture}
\caption{A graph with 7 vertices encoding the constraints for the tensors \eqref{eq:tensor}. The central node
  corresponds to the index $s_0$ from \eqref{eq:tensor} while other nodes to the remaining indices.}
\label{fig:hexa}
\end{figure}

The coefficients of the tensors are denoted by
\begin{equation}
  T^{s_0}_{s_1, s_2, s_3, s_4, s_5, s_6}\,,
  \label{eq:tensor}
\end{equation}
where the index $s_0$ stands for the site at the middle, and the indices $s_j$ with $j=1,\dots,6$ correspond to the
sites of the hexagon. We call $s_0$ the bulk index and $s_j$ the bond indices; the reason for this will become clear in
the next Section.
For simplicity we will consider only two dimensional local spaces, therefore $s_j$ can be either $0$ or $1$.

In the following we will
first list the necessary properties of the tensors and then provide an overview of the solutions we have found.

\subsection{Necessary properties of the tensors}

We require that the tensors are rotationally symmetric, in the sense of the geometric arrangement of the graph
above. This means that
\begin{equation}
  T^{s_0}_{s_1, s_2, s_3, s_4, s_5, s_6} = T^{s_0}_{s_2, s_3, s_4,s_5,s_6,s_1}\,.
  \label{eq:rotinv}
\end{equation}
Once this symmetry property is enforced, it is enough to consider one constraint for the maximal entanglement. For
example, we can choose the bi-partition $(0,1,2)\cup(3,4,5,6)$. Correspondingly, we define the reduction
\begin{equation}
  (\rho_{012})_{s_0s_1s_2}^{s_0's_1's_2'}\!\!=\!\! \sum_{\{p_j\}}\! T^{s_0}_{s_1, s_2, p_3, p_4, p_5, p_6} (T^{s_{0}'}_{s_1', s_2',p_3, p_4, p_5, p_6})^*\,,
  \label{eq:rho012}
\end{equation}
where the summation goes from $0$ to $1$ for each index $p_3,p_4,p_5,p_6$. 

Then the isometry condition with respect to this bipartition reads
\begin{equation}
 (\rho_{012})_{s_0,s_1,s_2}^{s_0',s_1',s_2'}=  \delta_{s_0}^{s_0'} \delta_{s_1}^{s_1'} \delta_{s_2}^{s_2'}.
  \label{eq:isom}
\end{equation}

Having established the constraint, we proceed to discuss the entanglement for other possible bi-partitions. In
the case of 7 qubits there is no perfect tensor \cite{AME-7qbits-nogo}, therefore we will always find a bi-partition
with not maximal entanglement. To characterize entanglement on various bi-partitions we utilize R\'enyi-2 entropy. More specifically we will compute
\begin{equation}
  \Delta s_{ijk} \equiv \Tr \rho_{0ij}^2 - 1/8 \,,
  \label{eq:renyi}
\end{equation}
where the reduced density matrices $\rho_{ijk}$ are defined analogously to $\rho_{012}$ \eqref{eq:rho012} but with
$s_i$, $s_j$ and $s_k$ being retained instead of $s_0$, $s_1$ and $s_2$. Due to rotational invariance it is sufficient
to compute $\Delta s_{013}$, $\Delta s_{014}$, $\Delta s_{123}$, $\Delta s_{124}$ and $\Delta s_{135}$.

\subsection{Solutions to the isometry condition}

Let us focus on tensors with local dimension equal $2$ and real coefficients.
Furthermore, we look for solutions to the isometry condition
\eqref{eq:isom} that have rotational invariance in the bond 
indices \eqref{eq:rotinv}. Under these conditions the number of different coefficients
$T^{s_0}_{s_1,s_2,s_3,s_4,s_5,s_6}$ is 28 and the number of independent equations is 33. 

Such a system is too complicated to be solved analytically in general. Thus further restrictions are imposed. We require the
solutions to be invariant under simultaneously flipping all indices: 
\begin{equation}
  T^{s_0}_{s_1,s_2,s_3,s_4,s_5,s_6}  = T^{\bar s_0}_{\bar s_1, \bar s_2, \bar s_3, \bar s_4, \bar s_5, \bar s_6}\,,
  \label{eq:z2inv}
\end{equation}
where $\bar s = 1 - s$. We also require that the tensors should be invariant under spatial reflection over the main diagonal ,
\begin{equation}
  T^{s_0}_{s_1,s_2,s_3,s_4,s_5,s_6} = T^{s_0}_{s_{1},s_{6},s_{5},s_{4},s_{3},s_{2}}\,.
  \label{eq:reflinv}
\end{equation}
With these two extra conditions the number of unknown coefficients is reduced to $13$.
The independent components can be chosen as follows:
\begin{equation}
  \label{aT}
  \begin{aligned}
    a_{1} &\equiv T^{0}_{000000},\quad     a_{2}\equiv T^{1}_{000000},\quad
    a_{3} \equiv T^{0}_{000001}\\
    a_{4} &\equiv T^{1}_{000001},\quad
    a_{5} \equiv T^{0}_{000011} ,\quad
    a_{6} \equiv T^{1}_{000011} \\
    a_{7} &\equiv T^{0}_{000101} ,\quad
    a_{8} \equiv T^{1}_{000101} ,\quad
    a_{9} \equiv T^{0}_{001001}\\
    a_{10} &\equiv T^{1}_{001001} ,\quad
             a_{11} \equiv T^{0}_{000111},\quad
              a_{13} \equiv T^{0}_{010101} \\
    a_{12} &\equiv T^{0}_{001011} = T^{0}_{001101}\\
  \end{aligned}
\end{equation}
All other tensor components are obtained by either rotational and/or spin reflection invariance.
We found that the number of independent
equations is reduced to $14$.

We used a combination of numerical and analytical methods to find the solutions to the conditions \eqref{eq:isom}, with
the symmetry requirements imposed. The numerical method consisted of starting with a randomly chosen tensor, and
numerically finding the minimum of a cost function $\Delta s_{012}$ from \eqref{eq:renyi}.
Afterwards, having collected numerical examples for the
solutions, we made an attempt to observe patterns between the coefficients of the tensors. In various cases we found
certain relations, which subsequently allowed us to find explicit solutions. We used the programs \texttt{octave} and
\texttt{Mathematica} \cite{MathematicaTensorsPaper}.

In order to get an overall picture of the solutions, we plot the R\'enyi entropies of the corresponding reduced density matrices, $\Delta s_{013}$ and $\Delta s_{014}$ \eqref{eq:renyi} in Figure \ref{fig:entr}. The three emerging lines in the plot correspond to one-parameter classes of solutions.
\begin{itemize}
\item Type I solutions are the most interesting for our purposes. These solutions are truly imperfect: they do not have
  maximal entanglement for those bipartitions, for which we do not impose constraints.
 Exact algebraic expressions found for this family of solutions are presented below.
  \item Type II solutions:  we could not find analytical expressions corresponding to this type. 
  \item Type III solutions are isometries also with respect to the bipartition $(0,1,3)\rightarrow (2,4,5,6)$, which is
    signified by $\Delta s_{0,1,3}=0$.
     We found analytic solutions for this type, shown in Appendix \ref{app:solutions}.
\end{itemize}

In addition to the three one-parameter families Fig. \ref{fig:entr}  shows also two isolated points, P1 an P2. Both
points correspond to solutions that only satisfy the required isometry. An explicit expression corresponding to point P2,
 is shown in Appendix \ref{app:solutions}.

\subsection{Type I. solutions}

These solutions of type I for the tensor $T$ are written in the variables  introduced in \eqref{aT} with additional relations
\begin{equation}
  \begin{aligned}
   a_2 - a_1  &= a_5 - a_6 = a_7 - a_8 = \frac{(-1)^j}{8\sqrt{2}}\,, \\
  \label{}
   a_4 - a_3  &=  \frac{(-1)^k}{4\sqrt{2}}\,, 
 \end{aligned}
 \label{eq:typei}
\end{equation}
where $j$ and $k$ are either $0$ or $1$. \verb!Mathematica! can find several analytic solutions of this type. We give a one-parameter example here, with $j=k=0$:
\begin{equation}
  \label{ak}
  \begin{aligned}
a_1 &= \frac 18 \left(-\sqrt 2 + 24 a  \mp 4D \right) \\
a_3 &= \frac 1{16} \left(-\sqrt 2 \mp 4D \right) \\
a_5 &= \frac 1{16} \left( \sqrt{2} - 16 a \mp 8D  \right)\\
a_7 &= a \\
a_9 &= \frac 1{16} \left(- \sqrt{2} - 16 a \pm 8D  \right) \\
a_{10} &= \frac 1{8} \left(\sqrt{2} - 8 a \pm 4D  \right) \\
a_{11} &= \frac 1{16} \left(\sqrt{2} - 32 a \pm 4D  \right) \\
a_{12} &= \frac 1{16} \left(-\sqrt{2} + 32 a \pm 4D  \right) \\
a_{13} &= \frac 1{16} \left(-\sqrt{2} + 32 a \mp 12D  \right) 
	\end{aligned}
\end{equation}
where $a$ is a real parameter satisfying $0 < a < \frac{\sqrt{2}}{16}$
and $D=\sqrt{a (\sqrt{2} - 16a)}$.
The entropies corresponding to these tensors, marked in Fig. \ref{fig:entr}, constituting a family of $7$-qubit states, read
\begin{equation}
  \begin{aligned}
  \Delta s_{013} &= \frac{1}{64} + a  \left( \sqrt{2} - 16 a \left( 5 + 64 a \left( 8 a - \sqrt{2} \right) \right) \right)\,, \\
  \Delta s_{014} &= \frac{3}{32} = 0.9375\,.
\end{aligned}
\end{equation}

\begin{figure}[t]
  \centering
  \includegraphics[width=85mm]{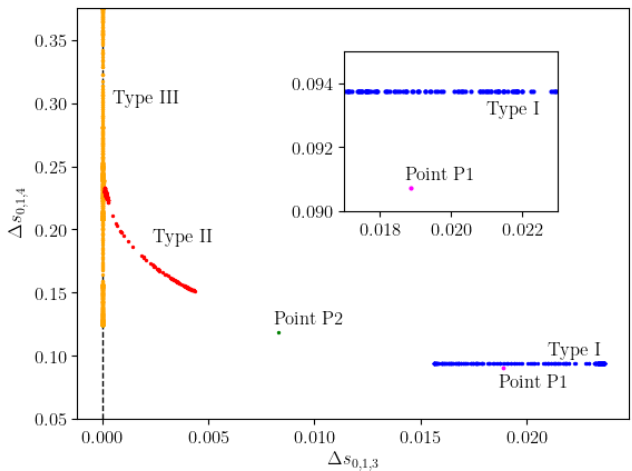}
  \caption{R\'enyi-2 entropies \eqref{eq:renyi} of the reduced density matrices $\rho_{013}$ and $\rho_{014}$
    corresponding to $7$-index tensors $T_{s_1,s_2,s_3,s_4,s_5,s_6}^{s_0}$ that satisfy the isometry condition
    \eqref{eq:isom} as well as rotational, spin-flip and spatial reflection invariance.
Every point denotes a numerical solution to the imposed constraints.
    The colors blue, red and orange corresponds to tensors belonging to first, second and third type receptively,
      except of two isolated points P1 and P2.  Inset: magnification of the same plot around the isolated point P1.}
  \label{fig:entr}
\end{figure}

{Furtheremore, in Appendix \ref{sec:n-gon} we present a general construction of such tensors with local dimension bulk subsystem equal $d_1$ and of boundary subsystems equal $d_1 d_2^2$,
where $d_1\geq 2$ and $d_2 \geq 2$. Moreover these tensors need not be hexagonal but with arbitrary number local boundary subsystems $n \geq 4$.
}

\section{Application in the AdS/CFT correspondence}

In this Section, we show how to build holographic tensor networks using the tensors introduced in previous two
Sections. Since the presented results are a generalizations of the arguments from \cite{rafal-karol-mykhailo--holocode}, we refer the reader therein for detailed discussion of
underlying concepts and employed arguments.

In the bulk of the discussion, for the clarity of the presentation, we focus on the $\{6,k\}$ tilings of the Poincar\'e
disk.This means that we treat a regular tessellation with hexagons, $k \geq 4$ of which meet at each vertex. However,
the presented arguments also hold for pentagons and polygons with a larger number of edges.   
In our construction, the above discussed tensors are sometimes considered as isometries mapping subsystems as
$(0)\to(1,2,3,\cdots)$, $(0,1)\to(2,3,\cdots)$ or $(0,1,2)\to(3,\cdots)$. In each of these cases different rescaling of
the tensor is necessary. 
For the sake of clarity, we omit these rescalings to restore them only when necessary. 

We consider a tensor network constructed in a vertex inflation manner \cite{Central_charges_and_scaling}. In the first step, we place one tensor $T^{s_0}_{s_1,s_2,s_3,s_4,s_5,s_6}$ in a tile, identifying $s_0$ index with the bulk degree of freedom. The remaining indices are identified with boundary degrees of freedom, each index corresponding to one edge of the tile. 
In the consecutive steps to enlarge the network, we place the same tensor on all tiles with a common edge or vertex to the ones already occupied by a tensor network, one by one. 
In each sub-step, we contract indices of neighboring tensors, which correspond to a common edge. Since in each new tensor we contract at most two indices with the tensor network, which happens if the tensor "closes the loop" around a tessellation vertex, we can interpret new tensor as isometry $(0,1)\to(2,3,4,5,6)$ or $(0,1,2)\to(3,4,5,6)$ from new bulk degree of freedom ($0$) and old boundary degrees of freedom ($1$ or $1,2$) into new boundary degrees of freedom. Thus, the entire tensor network $\mathcal{T}$, constructed in multiple rounds of such expansions, serves as an isometry from the bulk to the boundary.

In our work, we are primarily interested in correlation functions induced by the image of some bulk operator $\mathcal{O}$ on chosen points in the boundary.
To calculate them, we first map the bra and ket indices of the operator by the tensor network and its conjugate to obtain the boundary operator, and then calculate the expectation value of some probing observables $\{v_n\}$ corresponding to the chosen boundary subsystem. Thus, the general formula for the correlation function reads
\begin{equation}
\label{eq:gen_correlations}
\la \phi(x_1) \cdots \phi(x_n) \ra := \text{Tr}\left[ \mathcal{T}\mathcal{O}\mathcal{T}^{\dagger} (v_1 \otimes \cdots \otimes v_n \otimes \id) \right],
\end{equation}
where the identity is placed on the remaining boundary subsystems.

\begin{figure}[t]
  \centering
      \includegraphics{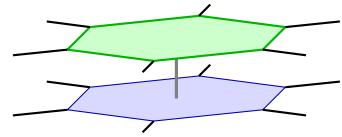}
  \caption{Contraction of bulk indices between planar hexagonal tensor (violet) and its conjugate (green), which appears while calculating  correlation functions for bulk subsystems unoccupied by bulk operator \eqref{eq:gen_correlations}}
  \label{fig:hexa1}
\end{figure}

Furthermore, we set the observables $v_i$ to be traceless. The reason for that is twofold. First, one can check by repeatedly applying the isometry conditions of tensors $T$ that in the calculation of the one-point correlation function the entire tensor network is reduced, and the result is given by 
\begin{equation*}
\la \phi(x_1) \ra = \text{Tr}[v_1]\text{Tr}[\mathcal{O}],
\end{equation*}
for any bulk operator $O$, while in conformal field theories one-point correlation function is typically equal to zero.
 Furthermore, higher point correlation functions may factorize into products of $\text{Tr}[v_i]$ terms, without any relation to distances between observables, which is unphysical as well. 

To describe second and higher-order correlation functions, it is useful to define a notion of a \textit{path} between
two distinct boundary or bulk indices.
A path is a set of tensors such that each path's tensor is connected with two other path's tensors on non-adjacent edges, or is connected with only one path's tensors and possess a distinct index corresponding to the begging or the end of the path. 
Loosely 
speaking a path is a connection built from tensors, which does not take sharp turns on the tiles' vertices, as presented in Fig. 6 in \cite{rafal-karol-mykhailo--holocode}.

A special type of path is a \textit{geodesic path}, where for each path's tensor, one edge of the tile belongs to the same geodesic on Poincar\'e disk, or there exists a geodesic which goes through the tiles.  Loosely speaking geodesic path takes the same turn at each node.
Equipped with these notions, we are ready to state the following
\begin{lem}
Consider a regular tiling of Poincar\'e disk with $n$-gons ($n \geq 5$), $k \geq 4$ of them meeting in each vertex. Then for any two indices, there exists at most one path connecting them.
\end{lem}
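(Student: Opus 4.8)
The statement is a purely combinatorial-geometric fact about regular hyperbolic tilings $\{n,k\}$ with $n\ge 5$, $k\ge 4$: any two indices (tensor legs, which live on edges of the tiling, plus the bulk indices at centers of tiles) are connected by at most one path, where a path is a chain of tiles that never ``turns sharply'' at a vertex. The plan is to argue by contradiction: suppose two distinct paths $P$ and $P'$ connect the same pair of endpoints. Then $P \cup P'$ contains a closed loop of tiles — a cycle in the ``tile adjacency graph'' in which consecutive tiles share an edge and the loop never backtracks. I want to show such a closed, non-backtracking tile loop cannot exist in these tilings.

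**Key steps.**

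First I would set up the dual picture: associate to the tiling its incidence structure, and to a path the sequence of edges it crosses. The non-sharp-turn condition means that at each tile the path enters on one edge and exits on an edge that is ``opposite enough'' — for an $n$-gon with $n\ge 5$ the two edges are not adjacent. Second, I would take the union of the two hypothetical distinct paths and extract a \emph{simple} closed tile-cycle $\gamma$ (if the two paths share initial segments, peel those off; what remains, together with the endpoints, forms a genuine loop). Third — the geometric heart — I would pass to the Poincaré disk and consider the closed curve traced through the centers of the tiles of $\gamma$. Because each tile's contribution to this curve is a chord joining (near-)opposite edges, the curve is ``locally geodesic-like'': the exterior angle accumulated at each tile-to-tile step is bounded. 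I would then invoke Gauss–Bonnet on the hyperbolic region bounded by $\gamma$: the enclosed area is positive, so the total turning must be $2\pi$ plus the (positive) enclosed curvature integral, i.e.\ strictly greater than $2\pi$. But the per-tile turning is controlled by how far from ``straight'' a non-sharp turn in an $\{n,k\}$ tile can be, and for $n\ge 5$, $k\ge 4$ the exterior angle at each step is small enough (this is where the specific inequalities $n\ge 5$, $k\ge 4$ enter — they guarantee the tiles are ``fat'' and the vertices have enough tiles around them that a non-sharp turn deflects the curve by less than the amount a closed loop would need). Summing, the total turning around $\gamma$ falls below $2\pi$, contradicting Gauss–Bonnet. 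Hence no such loop, hence at most one path.

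**Alternative / cleaner route.** Rather than Gauss–Bonnet one can argue combinatorially via the tree-like coarse structure: the ``cut system'' of geodesics of the $\{n,k\}$ tiling partitions the disk, and a non-sharp path, once it crosses one of these geodesic cuts, can never cross it back (each tile of the path lies on a definite side, and the non-sharp-turn rule forbids returning). Two paths with the same endpoints would have to cross exactly the same set of cuts in the same order, which forces them to occupy the same tiles. I would make precise the claim ``a path crosses each dividing geodesic at most once'' by an induction on the tiles of the path, using the $n\ge 5$ condition to ensure the exit edge lies strictly on the same side of every already-crossed cut. This avoids curvature estimates entirely and is probably the form the authors intend.

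**Main obstacle.** The delicate point in either approach is making the notion of ``path'' and ``non-sharp turn'' precise enough to push the induction or the turning-angle bound through uniformly in $k$ — in particular handling the tiles where the path contracts two indices (``closes the loop around a vertex''), and checking that the borderline case $n=5$, $k=4$ still satisfies the strict inequality needed. I expect roughly a page once the dual/cut-geodesic language is fixed; the rest is bookkeeping.
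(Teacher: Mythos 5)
Your ``alternative / cleaner route'' is essentially the paper's argument, and in the paper it is even terser: one assumes two paths with the same endpoints, inscribes geodesics tangent to the inner edges of the tiles near the first intersection and contained between the two paths, and observes that a second intersection of the paths would force the two geodesics to intersect a second time, which is impossible in the hyperbolic plane (the paper phrases the lemma as a direct generalization of Lemma~1 of the cited hyperinvariant-network work). Your cut-system formulation --- a non-sharp path crosses each dividing geodesic of the $\{n,k\}$ tiling at most once, so two paths with common endpoints must coincide --- is the same hyperbolic-geometry fact packaged through the tiling's own geodesics rather than through geodesics inscribed between the two hypothetical paths; either way the conditions $n\ge 5$, $k\ge 4$ enter only to guarantee that the non-adjacent-edge (no sharp turn) rule keeps the path on one side of such a geodesic. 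So that branch of your proposal is sound and matches the paper in substance.

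Your primary Gauss--Bonnet route, however, has a genuine gap as sketched. You argue that the total turning of the closed tile-loop must exceed $2\pi$ (correct: it must equal $2\pi$ \emph{plus} the enclosed hyperbolic area), and then conclude a contradiction because ``the per-tile turning is small, so the sum falls below $2\pi$.'' That last step fails: the loop can contain arbitrarily many tiles, so a uniform per-step bound on the exterior angle does not keep the sum below $2\pi$. The correct quantitative statement is a turning-per-unit-length (geodesic curvature) bound strictly below $1$, i.e.\ a quasi-geodesic/horocycle comparison showing the accumulated turning can never catch up with $2\pi$ plus the growing enclosed area; your sketch does not supply this estimate, and without it the contradiction does not follow. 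If you want to keep the curvature-based version, you must compare turning against area, not against the bare constant $2\pi$; otherwise the inscribed-geodesic (or cut-geodesic) argument is both shorter and complete.
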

The above lemma is a straightforward generalization of Lemma 1 from \cite{rafal-karol-mykhailo--holocode}. Assume that there are two paths, then one can consider geodesics (lines) tangent to the inner edges of tiles at the path's intersection and beginning contained between them. Since the paths intersect for the second time, the geodesics must intersect as well, which contradicts the hyperbolic geometry.

\begin{thm}
The two-point correlation function between two boundary subsystems is non-trivial only if there exists a path connecting those subsystems. Moreover, in such a case, the two-point correlation function simplifies to
\begin{equation}
\la \phi(x_1) \phi(x_2) \ra= \Tr[ \tilde{\mathcal{O}}\mathcal{P} (v_1 \otimes v_2)],
\end{equation}
where $\mathcal{P}$ is the operator obtained from the contraction of the path tensors with their conjugate on all boundary indices except the distinct ones, $\tilde{\mathcal{O}} = \text{Tr}_{\perp \mathcal{P}}[\mathcal{O}]$ is the bulk operator traced on all bulk indices except the one belonging to the path
while $v_1$, $v_2$ are traceless probing observables on the boundary.
\end{thm}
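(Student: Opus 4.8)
The plan is to combine the isometry structure of the tensor network with the uniqueness-of-paths lemma just proved, exactly in the spirit of \cite{rafal-karol-mykhailo--holocode}. First I would recall that the full network $\mathcal{T}$ is an isometry from the bulk to the boundary, built up tile by tile so that each newly-added tensor $T$ is contracted with the existing network on at most two of its bond indices, and is therefore used as an isometry of the form $(0)\to(1,\dots)$, $(0,1)\to(2,\dots)$ or $(0,1,2)\to(3,\dots)$, all of which are guaranteed by the graph constraint \eqref{eq:isom} together with rotational invariance. The key consequence is a \emph{pruning} lemma: in the expression $\Tr[\mathcal{T}\mathcal{O}\mathcal{T}^\dagger(v_1\otimes v_2\otimes\id)]$, any tile $t$ whose bond indices are all either contracted with $\mathcal{T}^\dagger$ directly or carry the identity $\id$ on the boundary, and whose bulk index carries either $\id$ (if $\mathcal{O}$ does not act on it) or is likewise contracted against $\mathcal{T}^\dagger$, can be removed: the tensor $T$ meets its conjugate $T^\dagger$ on a set of indices containing a clique of the constraint graph, so that partial contraction produces a proportionality to identity, which then propagates. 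I would make this precise by an induction on the tiles, ordered from the boundary inward (the reverse of the inflation order), peeling off one tensor at a time.

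Second, I would argue that this pruning removes \emph{every} tile except those lying on a path connecting the two marked boundary subsystems carrying $v_1$ and $v_2$. A tile survives the pruning only if, after all removable tiles are gone, it still has at least two "active" directions — directions along which it is contracted with a surviving neighbour, or which terminate at one of the distinguished boundary indices, or which carry the nontrivial part of $\mathcal{O}$. By the very definition of a path given just before the lemma (a tensor connected to two other path tensors on non-adjacent edges, or to one path tensor while carrying a distinct terminal index), the set of surviving tiles is exactly a union of paths with endpoints among $\{x_1, x_2, \text{bulk support of }\mathcal{O}\}$. Here is where I invoke Lemma~2 (uniqueness of paths): between any two of these endpoints there is at most one path, so the surviving subnetwork cannot branch or loop; it is a single path $\mathcal{P}$ (possibly passing through the bulk tile on which $\mathcal{O}$ acts), or it is empty. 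If it is empty — i.e.\ no path connects $x_1$ to $x_2$ — then one of the two observables ends up isolated, the pruning detaches it entirely, and the trace factorizes as $\Tr[v_i]\cdot(\dots)=0$ because $v_i$ is traceless; this gives the "only if" half of the statement.

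Third, assuming a path $\mathcal{P}$ does connect $x_1$ and $x_2$, I would read off the claimed formula directly from what pruning leaves behind. After removing all off-path tiles, the remaining object is $\mathcal{P}$ — the contraction of the path tensors with their conjugates on all of their boundary bond indices except the two distinguished ones carrying $v_1,v_2$ — together with whatever remains of $\mathcal{O}$ on the path's bulk indices, namely $\tilde{\mathcal{O}}=\Tr_{\perp\mathcal{P}}[\mathcal{O}]$ after the off-path bulk indices have been traced out by the pruning (each off-path bulk tensor, being an isometry on $(0)\to(\dots)$, contributes an identity which, contracted against $\mathcal{O}$ on that leg, produces the partial trace). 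Collecting the pieces yields $\la\phi(x_1)\phi(x_2)\ra=\Tr[\tilde{\mathcal{O}}\,\mathcal{P}\,(v_1\otimes v_2)]$, as claimed.

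The main obstacle I anticipate is making the pruning induction genuinely rigorous rather than pictorial: one must check that at every stage there actually exists a removable tile (so the induction can proceed all the way down to the path), and that the clique condition really is met at each removal — i.e.\ that the indices on which $T$ meets $T^\dagger$ at that step always contain a triangle of the hexagonal constraint graph (two adjacent bond indices plus the bulk index, or the three of them in some rotation), which is where the specific geometry of the $\{n,k\}$ tiling with $k\ge 4$ enters. I would handle this by an explicit case analysis of how many of a tile's indices are "active", using that a tile closing a loop around a tessellation vertex shares at most two edges with the rest of the network, exactly as in \cite{rafal-karol-mykhailo--holocode}; the generalization from hexagons to arbitrary $n\ge 5$ is routine once the $k\ge 4$ condition is used to exclude short cycles, precisely as in the proof of Lemma~2.
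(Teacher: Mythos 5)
Your proposal is correct and follows essentially the same route as the paper: repeated reduction of tensors against their conjugates via the isometry (clique) conditions, the observation that the non-reducible remainder is bounded by paths, the uniqueness-of-paths lemma to conclude the remainder is the single path $\mathcal{P}$, and tracelessness of the $v_i$ to kill the correlator when no path exists. The paper itself gives only this sketch (deferring details to Theorem 2 of \cite{rafal-karol-mykhailo--holocode}), so your pruning induction is simply a more explicit rendering of the same argument.
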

To prove this theorem, it is sufficient to follow the steps of the proof of Theorem 2 from \cite{rafal-karol-mykhailo--holocode}. While calculating the correlation function, multiple tensors from the network will be reduced with their conjugates.
The boundaries of the non-reduced part of the network are by definition paths, because paths cannot be reduced using isometry conditions. 
However, since there is only one path which connects two boundary indices, the claim follows.

From now on, we consider a localized bulk operator $\mathcal{O}$, i.e. an operator which is supported on one subsystem in
the bulk. Then almost every node from the path can be further simplified by contracting bulk indices, for example
\begin{equation}
\label{eq:red_node}
N_{s_1,s_3}^{s_1', s_3'} = T^{s_0}_{s_1,s_2,s_3,s_4,s_5,s_6} \overline{T}^{s_0}_{s_1',s_2,s_3',s_4,s_5,s_6}~,
\end{equation}
 which is a reduction of $T$ to two subsystems. Such a node can be considered as a matrix from indices $s_1, s_1'$ into $s_3, s_3'$. It is simple to notice that due to the isometric properties of $T$, node $N$'s largest eigenvalue is $1$, and the corresponding (unnormalized) eigenvector is given by $\sum_i |ii\ra$, since application of this vector corresponds to the reduction of $T$ to only one subsystem. 

\begin{figure}[h]
  \centering
      \includegraphics{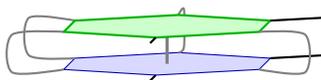}
  \caption{Contraction of the bulk and four pairs of boundary indices between planar hexagonal tensor (violet) and its conjugate (green), which happens for nodes on the path between distinct boundary indices while calculating two or three point correlation functions \eqref{eq:red_node}.}
  \label{fig:hexa2}
\end{figure}

The calculation of correlation reduces to repeated multiplications of the reduced path node and the leading eigenvector gives a zero term since we consider traceless observables.
One may define $\mu$ as a scaling factor of a tensor network network \cite{Central_charges_and_scaling},
i.e. the factor by which the number of boundary indices grow as one adds one layer of a tensors to the network.
Then as one enlarge the network by one layer, the distance between two boundary sides is multiplied by $\mu$ and the path is extended by two nodes.
Thus, we arrive at the following statement

\begin{cor}
\label{cor_2point}
Assume that the path connecting two boundary subsystems is geodesic, and the reduced path node has only one eigenvector to the maximal eigenvalue.
Then, by normalizing the nodes by this factor, in the limit of a large network, one obtains the desired power-law decay of the two-point correlation function,
\begin{equation}
\la\phi(x_1)\phi(x_2)\ra \propto \frac{1}{\ell^{2 \Delta}},
\end{equation}
where $\ell$ is the distance between points, and $\Delta$ is the scaling dimension given by
\begin{equation}
\Delta = - \text{ln}(|\lambda_2|)/\text{ln}(\mu).
\end{equation}
Here $\lambda_2$ denotes the subleading eigenvalue of the simplified node, as in eq. \eqref{eq:red_node}, and $\mu$ is a scaling factor of a network.
\end{cor}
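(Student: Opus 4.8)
The plan is to combine the structural reduction furnished by the preceding Theorem with a transfer-matrix spectral analysis of the node $N$ of \eqref{eq:red_node}. First I would invoke that Theorem: for a bulk operator $\mathcal{O}$ localized on a single bulk site that lies on the path (unique, by the preceding Lemma) joining the two boundary subsystems, the two-point function collapses to $\Tr[\tilde{\mathcal{O}}\,\mathcal{P}\,(v_1\otimes v_2)]$, with $\mathcal{P}$ the contraction of the path tensors and their conjugates on all boundary indices except the two distinguished endpoint ones. Because the path is \emph{geodesic}, the rotational invariance \eqref{eq:rotinv} makes every interior tensor be contracted with its conjugate in exactly the same index pattern, namely the one depicted in Fig.~\ref{fig:hexa2}; hence $\mathcal{P}$ factorizes as $\mathcal{P} = E_1\,N^{L}\,E_2$, where $N$ is the $d^2\times d^2$ matrix of \eqref{eq:red_node} acting on the doubled space of the distinguished index, $L$ is the number of interior nodes, the single exceptional node carrying $\tilde{\mathcal{O}}$ has been absorbed into a bounded, $\ell$-independent prefactor, and $E_1,E_2$ are fixed boundary caps built from the traceless observables $v_1,v_2$.

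Next I would spectrally decompose $N$. Since the singleton $\{v_1\}$ is a clique of the hexagonal graph of Fig.~\ref{fig:hexa}, the isometry condition \eqref{eq:isom} forces the one-site reduction of $T$ to be proportional to the identity; equivalently, as already observed in the text, $|\Omega\rangle \equiv \sum_i |ii\rangle$ is both a left and a right eigenvector of $N$ with eigenvalue equal to that proportionality constant, which is exactly the normalization factored out in the Corollary, so after rescaling the corresponding eigenvalue is $1$. The hypothesis that this eigenspace is one-dimensional lets me write $N = |\Omega\rangle\langle\Omega|/\langle\Omega|\Omega\rangle + N'$, where the rest of the spectrum, lying inside $N'$, has strictly smaller modulus and $\lambda_2$ is its dominant element. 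The decisive input is the tracelessness of the $v_i$: the cap $E_i$ feeds into $N$ through the vectorization $|v_i\rangle$ of $v_i$ in the doubled space, and $\langle\Omega|v_i\rangle = \Tr[v_i] = 0$, so the rank-one leading part of $N$ is annihilated at both ends and $E_1 N^{L} E_2 = E_1 (N')^{L} E_2$.

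Then I would estimate $(N')^{L}$: its norm decays like $|\lambda_2|^{L}$, with at worst polynomial-in-$L$ corrections should $\lambda_2$ be degenerate or $N'$ be non-diagonalizable (these do not change the exponent). Provided the fixed overlap of $\tilde{\mathcal{O}}$, $E_1$ and $E_2$ with the $\lambda_2$-eigenspace does not vanish --- a genericity condition that I would either state as an additional hypothesis or verify directly for the explicit Type~I tensors of \eqref{ak} --- one obtains $\Tr[\tilde{\mathcal{O}}\,\mathcal{P}\,(v_1\otimes v_2)] = c\,\lambda_2^{L}\,(1+o(1))$ with $c\neq 0$. Finally comes the geometric step: adding one inflation layer to the network multiplies the boundary distance $\ell$ by the scaling factor $\mu$ while lengthening a geodesic path by two interior nodes, so $L = \tfrac{2}{\ln\mu}\ln\ell + O(1)$; substituting, $|\lambda_2|^{L} \propto \ell^{\,2\ln|\lambda_2|/\ln\mu} = \ell^{-2\Delta}$ with $\Delta = -\ln|\lambda_2|/\ln\mu$, which is the assertion.

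I expect the main obstacle to be not the spectral bookkeeping but the rigorous control of the non-generic nodes. One must check that inserting $v_1$, $v_2$ and $\tilde{\mathcal{O}}$ at special locations along the path contributes only $\ell$-independent constants, so that they neither spoil the power law nor accidentally force $c=0$, and that the geodesic hypothesis genuinely produces the uniform product $N^{L}$ --- for a non-geodesic path the contraction pattern changes from node to node and this clean factorization fails, which is precisely why the Corollary restricts to geodesics. A secondary subtlety is justifying that $|\lambda_2|$, rather than some cancellation among subleading modes, truly sets the leading exponent; this is where the non-degeneracy of the relevant overlaps enters, and where (if $\lambda_2$ is complex) the power law should be read as the envelope of an oscillating correlator.
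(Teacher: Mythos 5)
Your proposal is correct and follows essentially the same route as the paper, which only sketches the argument in the text preceding the Corollary: reduce via the Theorem to a product of identical path nodes, note that the leading eigenvector $\sum_i|ii\rangle$ is killed by the tracelessness of the probes so the subleading eigenvalue $\lambda_2$ governs the decay, and use the geometric fact that one inflation layer multiplies $\ell$ by $\mu$ while adding two nodes to the path, giving $|\lambda_2|^{L}\propto\ell^{-2\Delta}$. Your explicit flagging of the nonvanishing-overlap (genericity) condition and of possible degeneracy or complex $\lambda_2$ is a welcome refinement that the paper leaves implicit, but it does not change the approach.
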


We stress that, in general, the nodes reduced in different ways may have different subleading eigenvalues. Thus, if the
path connecting boundary indices takes turns $1\to3$, the scaling dimension may be different from the cases when  the
path takes turns $1\to 4$. 
Moreover, multiple paths do not follow any geodesic present in the tiling and take arbitrary turns at each tile. Since
for each two boundary indices there is at most one path connecting them, one cannot ``replace'' those irregular paths
with more regular ones. In such cases, if no regularity is present, although the desired decay is still approximately present, one has to explicitly calculate the correlations.
This extra flexibility of our model, which might give various decays along different paths, can be interpreted as an
artifact of discretisation, which differentiates between geodesics. 
A similar feature was also discovered in a different approach that aims to create the CFT directly on the Poincar\'e
disk tiling: quasiperiodic CFT (qCFT) \cite{Jahn2022,Jahn2022_1} (see also \cite{quasi1,quasi2}). In this work, the correlations in a
qCFT possess a fractal self-similarity, which enables recovery of the decay properties of CFT correlation functions only
after averaging over sufficiently large regions. 

It turns out that we can generalize the results for three-point correlation functions as well. 

\begin{thm}
Consider a regular tiling of Poincar\'e disk with $n$-gons $(n \geq 5)$, $k \ge 4$ of them meeting in each vertex.
The three-point correlation function between three distinct boundary indices is nontrivial (nonzero) only if there
exists a path between two of them and a path leading from the third index to some tensor on the first path.  
While calculating the correlation, all tensors will reduce, except the ones on those paths.

Furthermore, there exists no configuration of three boundary indices such that each pair of them is connected by a path.
\end{thm}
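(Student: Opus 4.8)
The plan is to establish the path structure (and the accompanying ``everything else reduces'' claim) by the same contraction argument that proved the two-point formula, and to establish the non-existence of a triangle of paths by the hyperbolic-geometry argument behind the preceding Lemma; this last statement also feeds back into the first. For the no-triangle claim I would argue by contradiction: suppose $x_1,x_2,x_3$ were pairwise joined by paths $P_{12},P_{13},P_{23}$. Then $P_{12}$ and $P_{23}$ concatenate into a walk from $x_1$ to $x_3$ through $x_2$; were this walk itself a path it would coincide with $P_{13}$ by the preceding Lemma, forcing $P_{13}$ to run through the boundary tensor of $x_2$ with $x_2$ as a distinguished endpoint, which is impossible. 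So the walk genuinely turns a corner at $x_2$, and likewise the other concatenations at $x_1$ and $x_3$: the three paths then form a closed loop that is not itself a path. Now one runs the argument from the proof of the preceding Lemma: at a corner take the geodesics tangent to the inner edges of the tiles carrying the two incoming paths, observe that because the loop closes up these geodesics would have to meet a second time, and invoke the fact that two geodesics of the hyperbolic plane meet at most once.

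For the first two assertions I would start from $\Tr[\mathcal{T}\mathcal{O}\mathcal{T}^{\dagger}(v_1\otimes v_2\otimes v_3\otimes\id)]$ and apply the isometry conditions \eqref{eq:isom}, together with their rescaled $(0)\to(1,\dots)$ and $(0,1)\to(2,\dots)$ versions, tensor by tensor, exactly as in the two-point proof. A tensor collapses against its conjugate, its surviving legs becoming identities, as soon as the subset of its indices that are still uncommitted, meaning not carrying some $v_i$, not carrying $\mathcal{O}$, and not joined to an as-yet-unreduced neighbour, forms a clique of the constraint graph. Sweeping from the boundary inward, the tensors that survive are by definition organised into paths; moreover the surviving region must reach every $x_i$, since if it failed to reach $x_1$ then the boundary tensor carrying $v_1$ would reduce (its single-site reduction being proportional to identity) and contribute an isolated factor $\Tr[v_1]=0$. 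Hence the surviving region has a one-dimensional skeleton which is a connected union of path segments with free ends exactly $x_1,x_2,x_3$, and with no cycles: a cycle through some $x_i$ would produce two paths between one pair of them, against the preceding Lemma, while a cycle avoiding all $x_i$ is excluded by the loop argument above. A tree with three leaves has a single branch tensor $t$; two of the three arms meeting at $t$ join without a sharp turn (on an $n$-gon with $n\ge5$ three edges can never be pairwise adjacent) and form one path $P$ between two of the points, say $x_1,x_2$, while the third arm is a path from $x_3$ to $t\in P$. This is precisely the claimed structure, and since the surviving tensors are exactly those of $P$ and of the third arm, the remaining assertion follows by reinserting the reduced network.

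The step I expect to be the genuine obstacle is the geometric one: making ``a non-sharp-turning loop of paths cannot close up in a $\{n,k\}$ tiling with $n\ge5$, $k\ge4$'' into a clean statement requires pinning down the geodesic that guides a path and controlling where, and how often, it can cross the loop. By contrast the reduction bookkeeping is essentially a transcription of the two-point case and should present no difficulty.
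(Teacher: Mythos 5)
Your treatment of the first two assertions is essentially the paper's: apply the isometry conditions sweeping inward, note that the boundary of the unreduced region consists of paths, use tracelessness of the $v_i$ to force the surviving region to reach each $x_i$, and use uniqueness of paths plus the no-triangle property to conclude the survivors form a path between two indices plus a path from the third index into it. Your extra observation that among the three edges at the branch tensor two must be non-adjacent (so two arms merge into a single path) is a detail the paper leaves implicit, and is correct.

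The genuine gap is in the ``Furthermore'' claim, which is the heart of the theorem. You propose to rerun the Lemma's argument at a single corner of the closed loop: take the two geodesics tangent to the inner edges of the two paths meeting there and assert that, because the loop closes up, they ``would have to meet a second time.'' That is not forced. The two geodesics diverge at the corner, and each is only prevented from crossing the path it is tangent to; each can leave the region enclosed by the loop through the \emph{third} side (for instance, both escaping across $P_{13}$, one near $x_1$ and one near $x_3$), so no second meeting --- indeed no meeting at all --- is implied, and no contradiction with ``two geodesics intersect at most once'' arises. The paper's actual mechanism is different and quantitative: it inscribes \emph{three} geodesics, one tangent to each path near the last common nodes $n_1,n_2,n_3$, argues that they pairwise intersect and hence bound a geodesic triangle, and then counts angles --- the angle at the node where two paths diverge is $2\pi(1-3/k)$, while the angles with the third geodesic are at least the minimal tiling angle $2\pi/k$, giving an inner angle sum $2\pi(1-1/k)>\pi$, impossible for a hyperbolic triangle. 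This is precisely where $k\ge 4$ and the hyperbolic nature of the $\{n,k\}$ tiling enter; your sketch never uses them. Since your exclusion of cycles avoiding all three boundary indices (needed for the tree structure in the first part) leans on the same missing step, the gap propagates there too. You correctly identify the geometric step as the obstacle --- closing it requires the triangle/angle-sum idea (or an equivalent), not merely the two-geodesic uniqueness fact from the Lemma.
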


Let us start by showing by contradiction that there cannot exist $3$ paths connecting each pair of distinct indices. Lets consider three such paths $t_{1,2}, t_{2,3}, t_{1,3}$, and mark their last  common nodes  $n_1, n_2, n_3$. 

We start by inscribing two geodesics $l_{1,3}, l_{2,3}$ between paths $t_{1,3}$, $t_{2,3}$, such that they contain the inner edges of the first nodes after $n_3$ and are between those paths. 
In a similar manner, we can also replace the path $t_{1,2}$ with a geodesic $l_{1,2}$ tangent to it from the "inner" side of $n_3$ on any chosen tile. Since the paths $t_{1,3}$ and $t_{2,3}$ intersect with $t_{1,2}$ and all the geodesics are tangent to the inner edges of the paths, they must intersect with each other as well.

The angle between geodesics $l_{1,3}$ and $l_{2,3}$ is the same as between paths $t_{1,3}$, $t_{2,3}$ and is given by $2 \pi (1 - 3/k)$. The angles between $l_{1,2}$ and other geodesics
cannot be smaller than $2\pi/k$, which is the minimal angle in the tiling. Thus, the inner angles of a triangle constructed from geodesics are $(1 - 1/k) 2 \pi > \pi$, which contradicts hyperbolic geometry.

The important conclusion from the above property is that there exist no nodes surrounded by paths that do not belong to any of those.
Therefore, one can reduce all tensors except ones on the path connecting two distinct indices, and the path adjoining the last index.

With that, we are ready to present the final statement from this Section:

\begin{cor}
\label{cor_3point}
With the same assumptions as in Corollary \ref{cor_2point}, the obtained three-point correlation function has the desired form,
\begin{equation}
\label{3point_corr_behaviour}
 \la \phi(x_1)\phi(x_2)\phi(x_3) \ra = \frac{C_{123} }{\left(\ell_{12}\,\, \ell_{23} \,\, \ell_{13}\right)^\Delta},
\end{equation}
with the same scaling dimension as for 2-point correlation function, $\Delta = -\text{ln}(|\lambda_2|)/\text{ln}(\mu)$. The proportionality coefficient $C_{123}$ depends
on the reduction of the bulk operator to the node in which paths intersect.
\end{cor}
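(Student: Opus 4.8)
The plan is to reduce the network to the ``$Y$-shaped'' configuration forced by the preceding Theorem and then apply the eigenvalue argument of Corollary~\ref{cor_2point} on each of its three legs. First I would invoke the Theorem: the three-point function is non-trivial only when there is a path $t_{12}$ joining $x_1$ and $x_2$ together with a path from $x_3$ hitting $t_{12}$ at some tensor $n_c$, and off these paths every tensor reduces with its conjugate. Hence the correlator \eqref{eq:gen_correlations} collapses to a contraction supported on three geodesic legs emanating from $n_c$, where leg $a$ runs from $n_c$ to $x_a$, consists of $L_a$ reduced path nodes of the form \eqref{eq:red_node}, and carries the traceless observable $v_a$ at its far end. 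By the ``no node surrounded by paths'' conclusion from the proof of the Theorem, the three legs meet only at $n_c$, so the contraction factorises into the three legs glued to the single central tensor, into which the reduction $\tilde{\mathcal{O}}$ of the localised bulk operator has been absorbed.

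Next I would run the argument of Corollary~\ref{cor_2point} leg by leg. Each reduced node \eqref{eq:red_node} has leading eigenvalue $1$ with eigenvector $\sum_i|ii\ra$ and subleading eigenvalue $\lambda_2$; along a geodesic leg the same node is iterated $L_a$ times, and since $v_a$ is traceless the projection onto $\sum_i|ii\ra$ is annihilated. Thus, after the normalisation used in the Corollary, leg $a$ contributes a factor $\propto \lambda_2^{L_a}$, and the whole correlator is $\propto |\lambda_2|^{L_1+L_2+L_3}$ with proportionality constant $C_{123}$ given precisely by the contraction at $n_c$ of the three surviving subleading eigenvectors against $\tilde{\mathcal{O}}$ --- the asserted dependence of $C_{123}$ on the reduction of the bulk operator to the intersection node. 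Since $\lambda_2$ and $\mu$ are the same as in the two-point case, the scaling dimension $\Delta=-\ln(|\lambda_2|)/\ln(\mu)$ is unchanged.

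It remains to turn $|\lambda_2|^{L_1+L_2+L_3}$ into the product of distances. Here I would use the scaling structure of the vertex-inflation network: adding one layer multiplies every boundary distance by $\mu$ and extends every geodesic path by a fixed number of nodes, so --- consistently with the $\ell^{-2\Delta}$ law of Corollary~\ref{cor_2point}, in which the $x_1$--$x_2$ path has length $L_1+L_2$ --- one gets $\ell_{ab}\propto\mu^{(L_a+L_b)/2}$ for each pair. Multiplying the three relations gives $\ell_{12}\,\ell_{23}\,\ell_{13}\propto\mu^{L_1+L_2+L_3}$, whence $|\lambda_2|^{L_1+L_2+L_3}=\mu^{-\Delta(L_1+L_2+L_3)}\propto(\ell_{12}\,\ell_{23}\,\ell_{13})^{-\Delta}$, which is \eqref{3point_corr_behaviour}. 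The main obstacle I anticipate is exactly this geometric step: pinning down the dictionary between the three leg lengths and the boundary distances for a general position of $n_c$ in the tiling, and checking that the constant exponents really align so that all three distances enter symmetrically with one and the same $\Delta$. A secondary subtlety is the turn type of the legs: legs taking different turns (say $1\to3$ versus $1\to4$) give reduced nodes \eqref{eq:red_node} with different subleading eigenvalues, so the clean formula \eqref{3point_corr_behaviour} --- as in Corollary~\ref{cor_2point} --- holds under the implicit assumption that the three legs are geodesic of the same type, the general case giving only approximate decay.
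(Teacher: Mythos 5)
Your proposal is correct and follows essentially the same route the paper intends: Theorem 2 reduces the contraction to the Y-shaped configuration of path nodes meeting at the tensor carrying $\tilde{\mathcal{O}}$, and the eigenvalue argument of Corollary \ref{cor_2point} is then applied leg by leg, with $\ell_{ab}\propto\mu^{(L_a+L_b)/2}$ converting $|\lambda_2|^{L_1+L_2+L_3}$ into $(\ell_{12}\ell_{23}\ell_{13})^{-\Delta}$. The paper leaves this bookkeeping implicit (stating the corollary directly after Theorem 2), and the caveats you flag --- the leg-length/distance dictionary and the dependence of $\lambda_2$ on the turn type --- are exactly the qualifications the paper itself attaches to the geodesic-path assumption.
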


\subsection{Results}

Presented theory can be directly applied to the concrete tensors derived in previous Sections.
Let us start with planar hexagonal tensors discussed in Section \ref{sec:hexagonal}, which will be placed on $\{6,4\}$ tiling.

In the case of Type I tensors, it is relatively simple to compute the eigenvalues of reduced nodes. In the case of the bipartition $(1,3) \rightarrow (0,2,4,5,6)$, the ratio of the two largest eigenvalues is
\begin{equation}
\lambda^{hexa}_{2} = \left(\sqrt{2}-32 a\right) \sqrt{\left(\sqrt{2}-16 a\right) a} ,
\end{equation}
with $a \in [0, \sqrt{2}/16]$,
whereas for bipartition $(1,4) \rightarrow (0,2,3,5,6)$, the ratio is constant and equal ${\lambda_{2}^{hexb}}   = 1/4$.  These two ratios correspond to scaling dimension equal
\begin{equation}
  \begin{split} 
\Delta_{hexa} &= - \frac{\text{ln}\left[ \left(\sqrt{2}-32 a\right) \sqrt{\left(\sqrt{2}-16 a\right) a}\right]}{\text{ln}(3 + 2 \sqrt{2})} ,\\
         \Delta_{hexb} &= \frac{\text{ln}(4)}{\text{ln}(3 + 2 \sqrt{2})} \approx 0.78~.
 \end{split}    
\end{equation}
The minimum of $\Delta_{hexa}$ is obtained at $a_{min} = \left(\sqrt{2}-1\right)/32$
and is equal $\Delta_{hexa}|_{a = a_{min}} = \text{ln}(8)/\text{ln}(3 + 2 \sqrt{2}) \approx 1.18$, whereas in the limits $a\to 0$ and $a\to \sqrt{2}/16$ it tends to infinity.
Unfortunately, these two values do not coincide for any $a$ within the considered range, so it is troublesome to give them a physical interpretation. Similar phenomena were observed in alternative approaches, e.g. \cite{Jahn2019_2}, and usually are attributed to a fractal structure of the network. The standard workaround to obtain the physical value of the scaling dimension is to average over regions of the boundary instead of considering single indices.

In principle, one may try to reconcile these two values by applying local rotations to the considered tensor. Since the
reduced node is a reshuffling of the reduced density matrix $\rho_{1,3}$, such a rotation may modify its
spectrum. However, it turns out that such operation changes only \textit{third} eigenvalue of node reduced according to
bipartition $(1,4)\to(2,3,5,6)$, and does not stretch the range of second eigenvalue of a node reduced according to
bipartition $(1,3)\to(2,3,5,6)$ enough to encompass $\lambda_{2}^{hexb} = 1/4$.

When considering the tensors corresponding to isolated points on Fig. \ref{fig:entr}, we meet the same difficulty.

{The situation is physically more appealing with the planar pentagonal tensor discussed in Section \ref{sec:penta}.}
  Since it does
not have a ``bulk'' index, it cannot be considered as a part of tensor network mapping bulk Hilbert space to boundary
Hilbert space, but rather as a tensor network defining one specific state on the boundary. That being said, all
arguments from the above discussion still hold, so we can calculate the scaling dimension of correlations for such a
state. 

Due to symmetry of the tensor, we only need to consider one type of reduced node, corresponding to the bipartition $(1,3)\to(2,4,5)$, for which the ratio between the leading and subleading eigenvalues is given by
\begin{equation}
  \lambda^{penta}_{2}  = \frac{ \sqrt{5}-1}{4}~,
\end{equation}
hence the corresponding scaling factor read
\begin{equation}
\label{eq:d_penta}
  \Delta_{penta} = \frac{\text{ln}\left(4/\left(\sqrt{5}-1\right)\right)}{\text{ln} \left(\sqrt{3}+2\right)} \approx 0.89
\end{equation}

Finally, we may combine the obtained pentagonal tensor with some six-qubit perfect tensor (with 5 boundary and one bulk
index) into one node of the tensor network, as in \cite{rafal-karol-mykhailo--holocode}. The perfect tensor provides an
isometry from the bulk into boundary subsystems, whereas the pentagonal tensor ``spoils'' the perfectness of the
combined tensor, resulting in nontrivial correlations.   
Note that the choice of perfect tensor is irrelevant in the
calculation of decay of the two-point correlation functions. 
The last step is to join the pentagonal tensor with the perfect tensor by entangling  each boundary index of a perfect tensors with a boundary index from planar pentagonal tensor, using arbitrary unitary matrix. The violin plots of scaling dimension
obtained from such combined tensors are presented in Fig. \ref{fig:distribution_delta_random}. 

\begin{figure}[h]
\centering
          \includegraphics[height=2.2in]{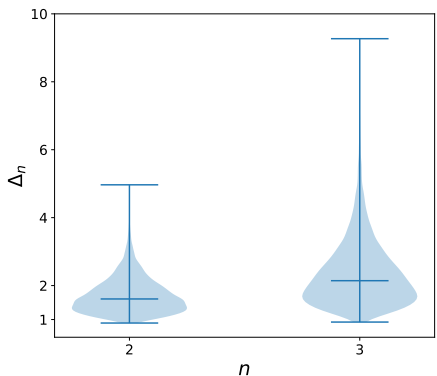}

 \caption{Violin plot of dimensions $\Delta_n$ corresponding to subleasing, $n = 2$, and next to subleasing, $n = 3$, eigenvalue  for nodes constructed as in \eqref{eq:red_node}, where the tensor $T$ was obtained by connecting planar pentagonal tensor with perfect tensor, using random unitary matrices drawn from Haar measure. The number of random samples was $10^6$.
 The maximal, minimal and median values are highlighted, with minimal values lower-bounded by \eqref{eq:d_penta}.
    }
    \label{fig:distribution_delta_random}
\end{figure}

We finish this Section by mentioning that in the Appendix \ref{sec:n-gon} we optimize and extend the construction
from \cite{rafal-karol-mykhailo--holocode}, thus provide general families of parameterized tensors suitable for $n$-gon tiling.
{However, this construction is slightly convoluted, with each boundary subspace consists of at least $3$ qubits leading to minimal local dimension $d = 8$,
so it is described in Appendix.
}

\section{Concluding remarks}

We introduced a framework to describe classes of tensors, corresponding to $n$-partite pure states with special
entanglement properties.
A given class is defined by an $n$-vertex graph with each edge representing certain constraints for the tensors. For
instance, a complete graph of $n$ vertices corresponds to an AME state of $n$ parties with arbitrary local dimension $D$.

In this work we proposed such a general construction of graph restricted tensors and analyzed in detail two specific
examples: 
imperfect pentagonal and hexagonal tensors for $D=2$. By
imposing symmetry requirements on the tensors, it was possible to obtain relatively small sets of equations for the
independent components of the tensor.

In these two concrete examples it was possible to find exact solutions for the explicit form of the tensors
corresponding to pure states of 5 and 7 qubits.
They take the form of one-parameter
families of tensors, 
and isolated points. However, this list is not known to be complete.
The results we obtained allowed us to establish analytically the correlation functions and scaling exponents for the
corresponding tensor network model of the 
holographic principle.
    {Furthermore, we also propose a family of tensors with larger local dimensions satisfying above-mentioned requirements.}

It is known that for the 7-qubit system there are no AME states \cite{AME-7qbits-nogo}.
Therefore, there is a huge interest in finding strongly entangled
states in this system with some extremal properties \cite{borras2007, AMEcomb1, tapiador2009, zha2013,Zhi:2022anp, TSFMMPPZ_26}.
Hence, the present work provides a contribution to this subject,
as we identified 7-qubit states determined by tensors \eqref{aT}, \eqref{ak}, 
for which 6 partial traces over 4 selected subsystems are maximally mixed
and it enjoys a desired rotational symmetry.

Extending the list of analytically solvable models requires further work. Moreover, it would be interesting to find
analytical bounds for the minimal values of the scaling exponents. This issue will be relevant in future searches for 
holographic models applicable to low dimensional conformal field theories.

\begin{acknowledgments}
We would like to thank Alexander Jahn and Arul Lakshminarayan for helpful remarks and fruitful correspondence.
RB acknowledges support by the National Science Center, Poland, under the contract number 2023/50/E/ST2/00472.
BP and MM were supported by the Hungarian National Research,
Development and Innovation Office, NKFIH Grant No. K-145904 and BP was also supported by the NKFIH excellence grant
TKP2021\_NKTA\_64. K{\.Z} acknowledges funding by the European Union under ERC
Advanced Grant Tatypic, Project No. 101142236.  
\end{acknowledgments}

\appendix

\section{Hypergraph-constrained tensors}
\label{app:hyper}

The notion of graph-constrained tensors as formulated in Definition \ref{def:graph_res}  misses certain
interesting tensors. A simple nontrivial example is presented in \cite{ternary-unitary} (see also \cite{sajat-multi}) as
a building block for 2D quantum circuits, such that the arising tensor network has cubic geometry. In this case the
tensor has 8 indices, which are associated with 8 vertices of a cube. The requirement is that we have maximal entanglement
for three different cuts that run parallel to the faces. 

If we try to fit this example into our framework, then a problem arises: the inclusion of all edges constituting
the desired cliques,i,e.  all cube edges and all face-diagonals, results in the creation of additional undesired cliques.
 An example is shown in Figure
\ref{fig:cube_graphs}.

To counter the problem, we present a more  general approach
based on hypergraphs, introduced in some analogy to the
notion of quantum hypergraph states \cite{Rossi13}.

\begin{defn}
An undirected hypergraph is a pair $G = (V,E)$, where $V = \{v_i\}_{i = 1}^n$ is a set of elements called vertices, and $E$ is a set of  subsets of vertices $e_{k_1,\cdots,k_m} := \{v_{{k_1}},\cdots ,v_{{k_m}}\}$ called hyperedges.
\end{defn}

Thus, we can define hypergraph-constrained tensors.

\begin{defn}\label{def:hyper_graph_res}
  Let $G$ be a hypergraph of $n$ vertices. A rank-$n$ tensor $T_{i_1,\cdots,i_n}$ is constrained by a hypergraph $G$ if for any bipartition of its indices $(l_1,...,l_{m'}) \rightarrow (k_{1},...,k_{m})$ with $m \le n/2$ for which $e_{{k_1},\cdots,{k_m}} = \{v_{{k_1}},\cdots,v_{{k_m}}\}$ is a hyperedge of $G$, the corresponding reduction of $T$ is proportional to identity, i.e.
\begin{equation}
\rho_{i_{k_1},...,i_{k_m}}^{j_{k_1},...,j_{k_m}} \propto \delta_{i_{k_1}}^{j_{k_1}}  \cdots \delta_{i_{k_m}}^{j_{k_m}}\,. 
\label{eq_start_v2}
\end{equation}
Furthermore, if for any bipartition $(l_1,...,l_{m'}) \rightarrow (k_{1},...,k_{m})$ satisfying \eqref{eq_start_v2}, the corresponding set of vertices $\{v_{k_1},...,v_{k_m}\}$ is a hyperedge of graph $G$, we say that $T$ is faithfully constrained by graph $G$.
\end{defn}

\begin{figure}
         \centering
         \includegraphics[width=3cm]{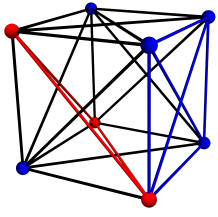}
	 \caption{
           Failed attempt to construct a graph that would encode the entanglement pattern of the tensors of
           \cite{ternary-unitary}. 
 After the introduction of cliques on all faces of the cube (e.g. the blue
           edges), new undesired cliques arise, one of them denoted in red. }
        \label{fig:cube_graphs}
\end{figure}

Comparing this definition with Definition \ref{def:graph_res}, one may notice that the only substantial difference is the exchange of the graph's cliques into hypergraph edges.
Definition \ref{def:hyper_graph_res} is convenient because of its generality, which is expressed in the following observation.

\begin{obs}
  For any tensor $T$ there exists a hypergraph $G$ for which $T$ is faithfully constrained by $G$.
\end{obs}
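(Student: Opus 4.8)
The plan is to construct the hypergraph $G$ directly from the entanglement data of $T$, taking as hyperedges precisely those index subsets whose associated reduction is proportional to the identity. First I would fix the tensor $T_{i_1,\cdots,i_n}$ of order $n$ and set $V = \{v_1,\dots,v_n\}$. For every subset $S = \{k_1,\dots,k_m\} \subseteq \{1,\dots,n\}$ with $m \le n/2$, the complement $\overline S = \{1,\dots,n\}\setminus S$ determines a unique bipartition $(l_1,\dots,l_{m'}) \rightarrow (k_1,\dots,k_m)$, and hence a unique reduction $\rho_{i_{k_1},\dots,i_{k_m}}^{j_{k_1},\dots,j_{k_m}}$. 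I would then define
\begin{equation}
E := \Big\{\, \{v_{k_1},\dots,v_{k_m}\} \;:\; |S| \le n/2,\ \rho_{i_{k_1},\dots,i_{k_m}}^{j_{k_1},\dots,j_{k_m}} \propto \delta_{i_{k_1}}^{j_{k_1}}\cdots\delta_{i_{k_m}}^{j_{k_m}} \,\Big\},
\end{equation}
and put $G := (V,E)$.

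The remaining steps are verifications against Definition \ref{def:hyper_graph_res}. For the first half of that definition --- that $T$ is constrained by $G$ --- I would note that if $e_{k_1,\dots,k_m}$ is a hyperedge, then by the very definition of $E$ the corresponding reduction is proportional to identity, so the constraint \eqref{eq_start_v2} holds. For faithfulness I would run the implication the other way: if a bipartition $(l_1,\dots,l_{m'})\rightarrow(k_1,\dots,k_m)$ satisfies \eqref{eq_start_v2}, then $\{v_{k_1},\dots,v_{k_m}\}$ was by construction placed into $E$, hence is a hyperedge. Both directions are immediate from the construction, so $T$ is faithfully constrained by $G$.

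Since the statement is essentially definitional, there is no real analytic obstacle; the only points requiring care are bookkeeping ones. I would make sure that (i) a bipartition with $m \le n/2$ is genuinely recovered from the subset $S$ of retained indices alone, so that $E$ is well defined as a set of subsets rather than of ordered data; (ii) the edge case $E = \emptyset$ is admissible, which it is, since the empty hypergraph is a hypergraph --- this covers tensors, such as generic product states, for which no reduction is maximally mixed; and (iii) the construction is canonical, yielding the unique largest hypergraph faithfully constraining $T$. It is also worth emphasizing, by way of contrast with Definition \ref{def:graph_res}, why the analogous statement fails for ordinary graphs: there the admissible index subsets must be exactly the cliques of a graph, and the clique structure imposes a downward-closure/compatibility condition (illustrated in Fig. \ref{fig:cube_graphs}) that a generic family of ``maximally mixed'' subsets need not satisfy; the hyperedge formalism removes precisely this rigidity.
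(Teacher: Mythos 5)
Your construction is exactly the paper's: take as hyperedges precisely those index subsets (of size at most $n/2$) whose reductions are proportional to the identity, whereupon both the constraint and faithfulness hold by definition. The proposal is correct and matches the paper's argument, with your bookkeeping remarks (well-definedness, empty edge set, contrast with the graph case) being harmless elaborations.
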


We can argue for the above observation by a simple construction.
Let us take an arbitrary tensor $T$ on order $n$ and a set $V$ of $n$ vertices. We define the hyperedges as the subsets of indices, for which equation \eqref{eq_start_v2} is satisfied.
By the construction, there are no "additional" hyper-edges corresponding to undesired bipartitions, and representation is faithful.

Finally we may provide the ``composition rules'' for contraction of tensors.

\begin{prop}
Let $T^{(1)}$ and $T^{(2)}$ be two tensors  constrained by hyper-graphs $G^{(1)}$, $G^{(2)}$.
If $T^{(1)}$ and $T^{(2)}$ are contracted on some indices corresponding to one hyper-edge in each graph, $e^{(1)}, e^{(2)}$,
then the resulting tensor is constrained by a hyper-graph $G$ which is a disjoint union of $G^{(1)}$ and $G^{(2)}$
both with removed vertices corresponding to contracted indices.

Furthermore, for every pair of super-edges \newline$\tilde{e}^{(1)}\supset e^{(1)}$, and $\tilde{e}^{(2)}\supset e^{(2)}$
one can connect all non-remover vertices of $\tilde{e}^{(1)}$ and all not remover vertices of $\tilde{e}^{(2)}$ creating a new hyper-edge $\tilde{e}$,  strengthening constraints on resulting tensor.
\end{prop}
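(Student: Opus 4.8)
The plan is to follow the template of the clique-version Proposition proved above, systematically replacing ``clique'' by ``hyperedge'' and reading the constraint \eqref{eq_start_v2} in its operator form $V_{T}V_{T}^{\dagger}\propto\id$; the genuinely new content — the super-edge strengthening — I would then obtain by one short index computation.

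For the first assertion I would proceed as follows. Let $\tilde e$ be a hyperedge of the disjoint union $G$. Since a disjoint union carries no hyperedge meeting both parts, $\tilde e$ lies entirely among the surviving vertices of one factor, say $\tilde e = D^{(1)}$ with $D^{(1)}$ a hyperedge of $G^{(1)}$ disjoint from $e^{(1)}$. I would then evaluate the reduction of the contracted tensor onto $D^{(1)}$ in three stages: (i) trace out all uncontracted indices of $T^{(2)}$ and of $\overline{T}^{(2)}$, which — since $e^{(2)}$ is a hyperedge of $G^{(2)}$ — produces the identity on the $e^{(2)}$-indices and their conjugates; (ii) note that this identity, transported through the contraction $e^{(1)}\!\leftrightarrow\! e^{(2)}$, merely identifies the contracted indices of $T^{(1)}$ with those of $\overline{T}^{(1)}$, i.e.\ it acts as a partial trace over the $e^{(1)}$-indices; (iii) trace out the remaining uncontracted indices of $T^{(1)}$ lying outside $D^{(1)}$. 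Stages (ii)--(iii) together realize the full reduction of $T^{(1)}$ onto $D^{(1)}$, which is $\propto\id$ because $D^{(1)}$ is a hyperedge of $G^{(1)}$; the case $\tilde e\subset V(G^{(2)})$ is symmetric, and the remark after Definition \ref{def:graph_res} takes care of subsets of hyperedges.

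For the second assertion I would write the super-edges as $\tilde e^{(1)}=e^{(1)}\cup f^{(1)}$ and $\tilde e^{(2)}=e^{(2)}\cup f^{(2)}$, with $f^{(1)},f^{(2)}$ their uncontracted parts, and set $\tilde e=f^{(1)}\cup f^{(2)}$. I would then spell out the reduction $\rho_{\tilde e}$ of the contracted tensor: it is a sum over the ket-side contracted index $c$, the bra-side contracted index $c'$, and over all uncontracted indices outside $f^{(1)}$ and $f^{(2)}$. Factoring the summand into the block built from $T^{(1)},\overline{T}^{(1)}$ and the block built from $T^{(2)},\overline{T}^{(2)}$, each block is exactly a reduction of one tensor onto its super-edge, hence by \eqref{eq_start_v2} equals (up to a constant) $\delta_{c,c'}$ times the identity on $f^{(1)}$, respectively on $f^{(2)}$. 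Multiplying the two blocks and summing over $c=c'$ collapses everything to a constant times the identity on $f^{(1)}\cup f^{(2)}=\tilde e$, which is the strengthened constraint; one then adjoins $\tilde e$ to $G$ and notes that by construction it contains no contracted vertices.

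The step I expect to demand the most care is the bookkeeping of the size restriction $m\le n/2$ in Definition \ref{def:hyper_graph_res}: the operator form of the constraint, used throughout, is licensed only for hyperedges of at most half the tensor order, so I would require $e^{(1)},e^{(2)},D^{(1)},\tilde e^{(1)},\tilde e^{(2)}$ to obey this bound and, most importantly, verify that the derived hyperedge $\tilde e$ inherits it. This comes out cleanly: writing $r=|e^{(1)}|=|e^{(2)}|$ and $n=n_1+n_2-2r$ for the order of the contracted tensor, the hypotheses $|\tilde e^{(1)}|\le n_1/2$ and $|\tilde e^{(2)}|\le n_2/2$ give $|\tilde e|=|f^{(1)}|+|f^{(2)}|\le n_1/2+n_2/2-2r\le n/2$. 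The only other loose end is tracking the (harmless) scalar prefactors hidden in ``$\propto$'': they simply multiply through successive contractions and never obstruct the argument.
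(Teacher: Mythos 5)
Your proposal is correct and follows essentially the same route as the paper: the paper's proof of this proposition is just the remark that the graph-case argument carries over, and your stages (i)--(iii) reproduce that staged reduction verbatim with cliques replaced by hyperedges. Your explicit factorization for the super-edge claim (each block reducing to $\delta_{c,c'}$ times an identity, then summing over $c=c'$) is exactly the computation the paper leaves implicit, and your bookkeeping of the $m\le n/2$ bound for the new hyperedge $\tilde e$ is a correct, worthwhile addition.
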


The proof of this proposition follows from the same arguments as for graph-constrained tensors and corresponding graphs.

\section{Solutions}
\label{app:solutions}

Here we list further solutions of the constraints imposed on hexagonal tensors. For the tensor components we use the notations
introduced in \eqref{aT}.

\subsection{Type II}

We have observed numerically that these solutions are characterized by the relation
\begin{equation}
  (a_1 - a_2)^{2} + 3 (a_3 - a_4)^{2} = \frac 18 \,.
\end{equation}
These solutions have $\Delta s_{013}\neq 0$ and $\Delta s_{014} \neq 0$ and thus the hyperbolic tiling with the corresponding tensors would yield non-trivial correlation functions.
However, we were not able to find any analytical formula for these solutions.

\subsection{Type III}

These solutions satisfy additional relations
\begin{equation}
  \begin{aligned}
    |a_{1}| = |a_{10}|\,,\quad |a_{2}| = |a_{9}|\,,\quad |a_{3}|=|a_{4}|=|a_{12}|
    \label{eq:typeiii}
  \end{aligned}
\end{equation}
For these solutions,the entropy of reduced state $\rho_{013}$ is equal $\Delta s_{013}=0$. Therefore, the corresponding tensors are not interesting as building blocks of holographic codes. There are several solutions satisfying conditions \eqref{eq:typeiii}. For instance, a one-parameter solution of this type reads
\begin{equation}
  \begin{aligned}
    a_1 &= a_{10} =  a
    \,, \quad a_{2} = a_9 = a + \frac{\sqrt{2}}{8}\,,  \\ a_{3}&= \frac{\sqrt{3 \!-\!16\sqrt{2} a \!-\! 128 a^2}}{16 \sqrt{2}}\,, \\
    a_4 &= a_{11} = -a_{12} = a_{13 } = a_{3}\,,\\
    a_{5} &= - a_6 = a_7 = -a_8 = - \frac{\sqrt{2}}{16}\,,
  \end{aligned}
\end{equation}
where $-\frac{3 \sqrt{2}}{16} < a < \frac{\sqrt{2}}{16}$.

\subsection{Isolated points}

We found two solutions for graph-restricted $7$-index tensor, which in Fig. \ref{fig:entr} corresponds to isolated
points marked as P1 and P2.  
For Point P2 we obtained an exact solution, which satisfies the constraints \eqref{eq:typei}.
 Below we provide explicit formulas for the case $j=1, k=0$. One of them  (P1A) reads, in variables of \eqref{aT}
\begin{equation}
  \begin{aligned}
    a_1&= \frac{\sqrt{445 \sqrt{2}+650}+10}{160 \sqrt{2}} \,, \\ 
a_3&= -\frac{\sqrt{50-5 \sqrt{2}}+20}{160 \sqrt{2}} \,, \\ 
a_5&= \frac{\sqrt{5 \left(50-31 \sqrt{2}\right)}-10}{160 \sqrt{2}} \,, \\ 
a_7&=\frac{1}{32} \left(\sqrt{ \frac{10 + \sqrt{2}}{10}} - \sqrt{2}\right)\,, \\
a_9&= -\frac{\sqrt{5 \left(\sqrt{2}+2\right)}-6}{32 \sqrt{2}} \,, \\ 
a_{10}&= -\frac{\sqrt{5 \left(\sqrt{2}+2\right)}+6}{32 \sqrt{2}} \,, \\ 
a_{11}&= \frac{1}{32} \sqrt{5+\frac{31}{5 \sqrt{2}}} \,, \\ 
a_{12}&= -\frac{1}{32} \sqrt{5-\frac{5}{\sqrt{2}}}\,, \\
a_{13}&= -\frac{1}{32} \sqrt{13+\frac{79}{5 \sqrt{2}}} \,.
  \end{aligned}
\end{equation}
The other analytical solution (P2B) we have found in the case $j=k=1$ is given by
\begin{equation}
  \begin{aligned}
   a_1 &=  -\frac{\sqrt{890-205 \sqrt{2}}-10}{160 \sqrt{2}} \,, \\ 
a_3 &=  -\frac{\sqrt{85 \sqrt{2}+130}+20}{160 \sqrt{2}} \,, \\ 
a_5 &=  \frac{\sqrt{10-5 \sqrt{2}}-10}{160 \sqrt{2}} \,, \\ 
    a_7 &=  \frac{1}{32} \left( \sqrt{\frac{10-\sqrt{2}}{10}} - \sqrt{2}\right) \,, \\
    a_9 &=  -\frac{\sqrt{115 \sqrt{2}+650}-30}{160 \sqrt{2}} \,, \\
    a_{10} &=  -\frac{\sqrt{115 \sqrt{2}+650}+30}{160 \sqrt{2}} \,, \\ 
  \end{aligned}
\end{equation}
\begin{equation*}
  \begin{aligned}
a_{11} &=  -\frac{1}{32} \sqrt{\frac{1}{10} \left(82-31 \sqrt{2}\right)} \,, \\ 
a_{12} &=  \frac{3}{32} \sqrt{\frac{1}{10} \left(\sqrt{2}+2\right)} \,, \\ 
a_{13} &=  \frac{1}{32} \sqrt{13-\frac{79}{5 \sqrt{2}}}~,
  \end{aligned}
\end{equation*}
for which the values of entropies for reduced density matrices are  $\Delta s_{013} = \frac{53}{6400} = 0.00828125$ and $\Delta s_{014} =\frac{19}{160} = 0.11875$ (see Figure \ref{fig:entr}).

\section{Construction of n-gon tiling}\label{sec:n-gon}

In this Appendix, we present a construction of family of tensors suited for $n$-gon tiling of Poincar\'e disk, with $n\geq 4$, with multiple free parameters, by generalizing the construction from \cite{rafal-karol-mykhailo--holocode}. The presented scheme is based on qubits; however, one can use any other finite-size Hilbert spaces as well.  This approach have natural advantage of strong generality and flexibility. On the other hand, it requires at minimum $3$ qubits for each boundary index, resulting in local dimension $8$.

We start with a qubit graph state (not to be confused with a graph-restricted tensor), which we refer to as $G$, defined
by the graph of the shape of a wheel with spokes,  as in  Fig. \ref{fig:hexa}. Each qubit on the circumference corresponds to a boundary degree of freedom, whereas the central qubit corresponds to the bulk one.
    {If one wishes to have different dimension of bulk subsystem, then this tensor may be replaced by graph state defined by the same graph but with chosen local dimension.}

By simple calculation, one can immediately check that such a state, as a tensor $G$, satisfies the isometry requirements $(0,i,i+1)\to(2,3,\dots)$.
 However, we have to spoil this tensor by supplementing it with another one, since $G$ is also an isometry in various different directions, which may lead to oversimplification of the tensor network while calculating correlation functions.

To do so, following \cite{rafal-karol-mykhailo--holocode}, we define a "frame" tensor $F$ on ${n}$ \textit{pairs} of qubits. It consists of dual-unitary matrices $U_{i_1 i_2}^{\;i_3 i_4}$, which have unitary property for two different sets of indices
\begin{equation*}
  \begin{split}
    \sum_{j_1,j_2} U_{i_1, i_2}^{j_1,j_2} \overline{U}_{k_1 k_2}^{j_1,j_2} &= \delta_{i_1,k_1}\delta_{i_2,k_2}\,, \\
    \;\sum_{j_1,j_2} U_{i_1, j_1}^{i_2,j_2} \overline{U}_{k_1 j_i}^{k_2,j_2} &= \delta_{i_1,k_1}\delta_{i_2,k_2} \,,
 \end{split}
    \end{equation*}
as presented diagrammatically in Fig.  \ref{fig:dual_uni}.
\begin{figure}[h!]
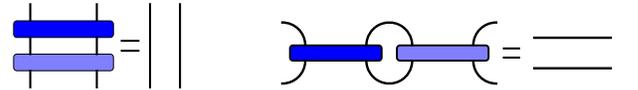

    \centering
        \centering
        \includegraphics[height=1.2 cm]{sim1.pdf}
        \hspace{1 cm}
        \includegraphics[height=1 cm]{sim2}
    \caption{Two isometry properties of dual unitary matrices. In these and subsequent figures, each line represents one index, and the sum is performed over indices which lines do not have loose ends. An exception is a line with two loose ends that corresponds to the Kronecker delta. The pale color represents conjugation.}
    \label{fig:dual_uni}
\end{figure}

We start our construction with $n$ outgoing ququart indices. Each quart is split into two qubits, one connecting it with the neighboring outgoing index on the right and one on the left, as presented in Fig. \ref{fig:frame_node_n}. Thus, we have $n$ connections, each between a pair of adjacent boundary indices of $F$. The next step is to intertwine these connections, using arbitrary dual unitary matrices. For each connection, one by one from the outside, dual unitary matrices $U_i$ ($i = 1,\cdots, n-3$) entangle it with the remaining connections, except neighboring left and right ones, see Fig. \ref{fig:frame_node_n}.
Thus, the tensor $F$ consists of $n(n-3)/2$ dual unitary matrices. To preserve the rotational symmetry of the construction we set the same matrices on all connections, which implies $U_1 = U_{n-3}^\top, U_2 = U_{n-4}^\top, U_3 = U_{n-5}^\top,\cdots$, so we have at most $\lceil (n-3)/2\rceil$ different dual-unitary matrices.

\begin{figure}[h!]
    \centering
        \centering
        \includegraphics[height=5 cm]{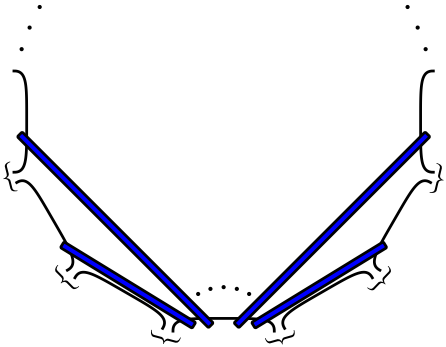}
    \caption{Part of the frame tensor $F$ construction, presenting first and last two dual-unitaries entangling one connection between two neighboring ququart indices with other connections.  The ququarts are decomposed into pairs of qubit, as denoted by the curly brackets, and then qubits from neighboring ququarts are combined by dual unitary-matrices represented as blue boxes.}
    \label{fig:frame_node_n}
\end{figure}

To show that the frame tensor $F$ serves as an isometry $(1,2)\to(3,4,\cdots)$, let us consider the contraction of $F$ with its conjugate on all indices except two neighboring ones. Let's first consider unitaries acting on connections which are not between distinct indices $(1,2)$. Using the first isometric property from Fig \ref{fig:dual_uni}, one can reduce these unitaries, starting with the outer, "shorter" ones simultaneously on all connections except one. After repeating this step, all dual unitaries are reduced with their conjugates, except for those that act on the connection between two non-contracted indices $(1,2)$.
However, since the ends of those unitaries opposite to distinct connection belong to different connections, they can be reduced using the second isometry property from Fig. \ref{fig:dual_uni}. After the reduction of all unitaries, the remaining indices constitute an identity between the uncontracted indices of $F$ and its conjugate.

Furthermore, to show that the frame tensor $F$ is not, in general, an isometry from two non-neighboring indices to the remaining ones, let us present two unitaries, each entangling different connections from distinct indices, which cross, as presented in Fig. \ref{fig:nonreduction_of_frame}. While calculating the contraction of the frame with its conjugate to check the isometry property, even if all other dual-unitaries would reduce with their conjugates, those two cannot, unless they are unitary also after reshuffling. Thus, if the frame is not built from AME states, it is not an isometry in any undesired direction.

\begin{figure}[h!]
    \centering
        \centering
        \includegraphics[height=5 cm]{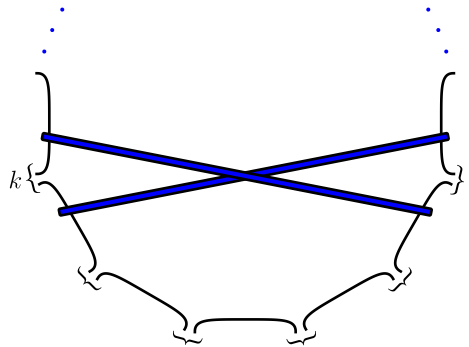}
    \caption{Two exemplary dual-unitaries which cannot be reduced while checking the isometry condition from $k^{\text{th}}$ and $l^{\text{th}}$ ququarts into remaining subsystems. Same as before pairs of qubits forming ququarts are denoted by curly brackets.}
    \label{fig:nonreduction_of_frame}
\end{figure}

Since we constructed an appropriate frame tensor, the last step is to entangle each (ququart) index of the frame $F$ with boundary indices of the graph state $G$, using arbitrary three-qubit unitaries, to create one consistent tensor.


\begin{thebibliography}{46}%
\makeatletter
\providecommand \@ifxundefined [1]{%
 \@ifx{#1\undefined}
}%
\providecommand \@ifnum [1]{%
 \ifnum #1\expandafter \@firstoftwo
 \else \expandafter \@secondoftwo
 \fi
}%
\providecommand \@ifx [1]{%
 \ifx #1\expandafter \@firstoftwo
 \else \expandafter \@secondoftwo
 \fi
}%
\providecommand \natexlab [1]{#1}%
\providecommand \enquote  [1]{``#1''}%
\providecommand \bibnamefont  [1]{#1}%
\providecommand \bibfnamefont [1]{#1}%
\providecommand \citenamefont [1]{#1}%
\providecommand \href@noop [0]{\@secondoftwo}%
\providecommand \href [0]{\begingroup \@sanitize@url \@href}%
\providecommand \@href[1]{\@@startlink{#1}\@@href}%
\providecommand \@@href[1]{\endgroup#1\@@endlink}%
\providecommand \@sanitize@url [0]{\catcode `\\12\catcode `\$12\catcode
  `\&12\catcode `\#12\catcode `\^12\catcode `\_12\catcode `\%12\relax}%
\providecommand \@@startlink[1]{}%
\providecommand \@@endlink[0]{}%
\providecommand \url  [0]{\begingroup\@sanitize@url \@url }%
\providecommand \@url [1]{\endgroup\@href {#1}{\urlprefix }}%
\providecommand \urlprefix  [0]{URL }%
\providecommand \Eprint [0]{\href }%
\providecommand \doibase [0]{https://doi.org/}%
\providecommand \selectlanguage [0]{\@gobble}%
\providecommand \bibinfo  [0]{\@secondoftwo}%
\providecommand \bibfield  [0]{\@secondoftwo}%
\providecommand \translation [1]{[#1]}%
\providecommand \BibitemOpen [0]{}%
\providecommand \bibitemStop [0]{}%
\providecommand \bibitemNoStop [0]{.\EOS\space}%
\providecommand \EOS [0]{\spacefactor3000\relax}%
\providecommand \BibitemShut  [1]{\csname bibitem#1\endcsname}%
\let\auto@bib@innerbib\@empty
\bibitem [{\citenamefont {Cirac}\ \emph {et~al.}(2021)\citenamefont {Cirac},
  \citenamefont {P\'erez-Garc\'{\i}a}, \citenamefont {Schuch},\ and\
  \citenamefont {Verstraete}}]{mps-intro-2020}%
  \BibitemOpen
  \bibfield  {author} {\bibinfo {author} {\bibfnamefont {J.~I.}\ \bibnamefont
  {Cirac}}, \bibinfo {author} {\bibfnamefont {D.}~\bibnamefont
  {P\'erez-Garc\'{\i}a}}, \bibinfo {author} {\bibfnamefont {N.}~\bibnamefont
  {Schuch}},\ and\ \bibinfo {author} {\bibfnamefont {F.}~\bibnamefont
  {Verstraete}},\ }\bibfield  {title} {\bibinfo {title} {Matrix product states
  and projected entangled pair states: Concepts, symmetries, theorems},\ }\href
  {https://doi.org/10.1103/RevModPhys.93.045003} {\bibfield  {journal}
  {\bibinfo  {journal} {Rev. Mod. Phys.}\ }\textbf {\bibinfo {volume} {93}},\
  \bibinfo {pages} {045003} (\bibinfo {year} {2021})},\ \Eprint
  {https://arxiv.org/abs/2011.12127} {arXiv:2011.12127 [quant-ph]} \BibitemShut
  {NoStop}%
\bibitem [{\citenamefont {{Or{\'u}s}}(2019)}]{tensor-network-review}%
  \BibitemOpen
  \bibfield  {author} {\bibinfo {author} {\bibfnamefont {R.}~\bibnamefont
  {{Or{\'u}s}}},\ }\bibfield  {title} {\bibinfo {title} {{Tensor networks for
  complex quantum systems}},\ }\href
  {https://doi.org/10.1038/s42254-019-0086-7} {\bibfield  {journal} {\bibinfo
  {journal} {Nat. Rev. Phys.}\ }\textbf {\bibinfo {volume} {1}},\ \bibinfo
  {pages} {538} (\bibinfo {year} {2019})},\ \Eprint
  {https://arxiv.org/abs/1812.04011} {arXiv:1812.04011 [cond-mat.str-el]}
  \BibitemShut {NoStop}%
\bibitem [{\citenamefont {{Paeckel}}\ \emph {et~al.}(2019)\citenamefont
  {{Paeckel}}, \citenamefont {{K{\"o}hler}}, \citenamefont {{Swoboda}},
  \citenamefont {{Manmana}}, \citenamefont {{Schollw{\"o}ck}},\ and\
  \citenamefont {{Hubig}}}]{mps-time-evolution-review}%
  \BibitemOpen
  \bibfield  {author} {\bibinfo {author} {\bibfnamefont {S.}~\bibnamefont
  {{Paeckel}}}, \bibinfo {author} {\bibfnamefont {T.}~\bibnamefont
  {{K{\"o}hler}}}, \bibinfo {author} {\bibfnamefont {A.}~\bibnamefont
  {{Swoboda}}}, \bibinfo {author} {\bibfnamefont {S.~R.}\ \bibnamefont
  {{Manmana}}}, \bibinfo {author} {\bibfnamefont {U.}~\bibnamefont
  {{Schollw{\"o}ck}}},\ and\ \bibinfo {author} {\bibfnamefont {C.}~\bibnamefont
  {{Hubig}}},\ }\bibfield  {title} {\bibinfo {title} {{Time-evolution methods
  for matrix-product states}},\ }\href
  {https://doi.org/10.1016/j.aop.2019.167998} {\bibfield  {journal} {\bibinfo
  {journal} {Ann. Phys.}\ }\textbf {\bibinfo {volume} {411}},\ \bibinfo {eid}
  {167998} (\bibinfo {year} {2019})},\ \Eprint
  {https://arxiv.org/abs/1901.05824} {arXiv:1901.05824 [cond-mat.str-el]}
  \BibitemShut {NoStop}%
\bibitem [{\citenamefont {{Berezutskii}}\ \emph {et~al.}(2025)\citenamefont
  {{Berezutskii}}, \citenamefont {{Liu}}, \citenamefont {{Acharya}},
  \citenamefont {{Ellerbrock}}, \citenamefont {{Gray}}, \citenamefont
  {{Haghshenas}}, \citenamefont {{He}}, \citenamefont {{Khan}}, \citenamefont
  {{Kuzmin}}, \citenamefont {{Lyakh}}, \citenamefont {{Lykov}}, \citenamefont
  {{Mandr{\`a}}}, \citenamefont {{Mansell}}, \citenamefont {{Melnikov}},
  \citenamefont {{Melnikov}}, \citenamefont {{Mironov}}, \citenamefont
  {{Morozov}}, \citenamefont {{Neukart}}, \citenamefont {{Nocera}},
  \citenamefont {{Perlin}}, \citenamefont {{Perelshtein}}, \citenamefont
  {{Steinberg}}, \citenamefont {{Shaydulin}}, \citenamefont {{Villalonga}},
  \citenamefont {{Pflitsch}}, \citenamefont {{Pistoia}}, \citenamefont
  {{Vinokur}},\ and\ \citenamefont {{Alexeev}}}]{tensor-network-qcomp}%
  \BibitemOpen
  \bibfield  {author} {\bibinfo {author} {\bibfnamefont {A.}~\bibnamefont
  {{Berezutskii}}}, \bibinfo {author} {\bibfnamefont {M.}~\bibnamefont
  {{Liu}}}, \bibinfo {author} {\bibfnamefont {A.}~\bibnamefont {{Acharya}}},
  \bibinfo {author} {\bibfnamefont {R.}~\bibnamefont {{Ellerbrock}}}, \bibinfo
  {author} {\bibfnamefont {J.}~\bibnamefont {{Gray}}}, \bibinfo {author}
  {\bibfnamefont {R.}~\bibnamefont {{Haghshenas}}}, \bibinfo {author}
  {\bibfnamefont {Z.}~\bibnamefont {{He}}}, \bibinfo {author} {\bibfnamefont
  {A.}~\bibnamefont {{Khan}}}, \bibinfo {author} {\bibfnamefont
  {V.}~\bibnamefont {{Kuzmin}}}, \bibinfo {author} {\bibfnamefont
  {D.}~\bibnamefont {{Lyakh}}}, \bibinfo {author} {\bibfnamefont
  {D.}~\bibnamefont {{Lykov}}}, \bibinfo {author} {\bibfnamefont
  {S.}~\bibnamefont {{Mandr{\`a}}}}, \bibinfo {author} {\bibfnamefont
  {C.}~\bibnamefont {{Mansell}}}, \bibinfo {author} {\bibfnamefont
  {A.}~\bibnamefont {{Melnikov}}}, \bibinfo {author} {\bibfnamefont
  {A.}~\bibnamefont {{Melnikov}}}, \bibinfo {author} {\bibfnamefont
  {V.}~\bibnamefont {{Mironov}}}, \bibinfo {author} {\bibfnamefont
  {D.}~\bibnamefont {{Morozov}}}, \bibinfo {author} {\bibfnamefont
  {F.}~\bibnamefont {{Neukart}}}, \bibinfo {author} {\bibfnamefont
  {A.}~\bibnamefont {{Nocera}}}, \bibinfo {author} {\bibfnamefont {M.~A.}\
  \bibnamefont {{Perlin}}}, \bibinfo {author} {\bibfnamefont {M.}~\bibnamefont
  {{Perelshtein}}}, \bibinfo {author} {\bibfnamefont {M.}~\bibnamefont
  {{Steinberg}}}, \bibinfo {author} {\bibfnamefont {R.}~\bibnamefont
  {{Shaydulin}}}, \bibinfo {author} {\bibfnamefont {B.}~\bibnamefont
  {{Villalonga}}}, \bibinfo {author} {\bibfnamefont {M.}~\bibnamefont
  {{Pflitsch}}}, \bibinfo {author} {\bibfnamefont {M.}~\bibnamefont
  {{Pistoia}}}, \bibinfo {author} {\bibfnamefont {V.}~\bibnamefont
  {{Vinokur}}},\ and\ \bibinfo {author} {\bibfnamefont {Y.}~\bibnamefont
  {{Alexeev}}},\ }\bibfield  {title} {\bibinfo {title} {{Tensor networks for
  quantum computing}},\ }\href {https://doi.org/10.1038/s42254-025-00853-1}
  {\bibfield  {journal} {\bibinfo  {journal} {Nat. Rev. Phys.}\ }\textbf
  {\bibinfo {volume} {7}},\ \bibinfo {pages} {581–593} (\bibinfo {year}
  {2025})},\ \Eprint {https://arxiv.org/abs/2503.08626} {arXiv:2503.08626
  [quant-ph]} \BibitemShut {NoStop}%
\bibitem [{\citenamefont {{Jahn}}\ and\ \citenamefont
  {{Eisert}}(2021)}]{holocode-review-jahn-eisert}%
  \BibitemOpen
  \bibfield  {author} {\bibinfo {author} {\bibfnamefont {A.}~\bibnamefont
  {{Jahn}}}\ and\ \bibinfo {author} {\bibfnamefont {J.}~\bibnamefont
  {{Eisert}}},\ }\bibfield  {title} {\bibinfo {title} {{Holographic tensor
  network models and quantum error correction: a topical review}},\ }\href
  {https://doi.org/10.1088/2058-9565/ac0293} {\bibfield  {journal} {\bibinfo
  {journal} {Quantum Sci. Tech.}\ }\textbf {\bibinfo {volume} {6}},\ \bibinfo
  {eid} {033002} (\bibinfo {year} {2021})},\ \Eprint
  {https://arxiv.org/abs/2102.02619} {arXiv:2102.02619 [quant-ph]} \BibitemShut
  {NoStop}%
\bibitem [{\citenamefont {{Pastawski}}\ \emph {et~al.}(2015)\citenamefont
  {{Pastawski}}, \citenamefont {{Yoshida}}, \citenamefont {{Harlow}},\ and\
  \citenamefont {{Preskill}}}]{ads-code-1}%
  \BibitemOpen
  \bibfield  {author} {\bibinfo {author} {\bibfnamefont {F.}~\bibnamefont
  {{Pastawski}}}, \bibinfo {author} {\bibfnamefont {B.}~\bibnamefont
  {{Yoshida}}}, \bibinfo {author} {\bibfnamefont {D.}~\bibnamefont
  {{Harlow}}},\ and\ \bibinfo {author} {\bibfnamefont {J.}~\bibnamefont
  {{Preskill}}},\ }\bibfield  {title} {\bibinfo {title} {{Holographic quantum
  error-correcting codes: toy models for the bulk/boundary correspondence}},\
  }\href {https://doi.org/10.1007/JHEP06(2015)149} {\bibfield  {journal}
  {\bibinfo  {journal} {JHEP}\ }\textbf {\bibinfo {volume} {2015}},\ \bibinfo
  {eid} {149}},\ \Eprint {https://arxiv.org/abs/1503.06237} {arXiv:1503.06237
  [hep-th]} \BibitemShut {NoStop}%
\bibitem [{\citenamefont {Maldacena}(1999)}]{Maldacena-ads-cft}%
  \BibitemOpen
  \bibfield  {author} {\bibinfo {author} {\bibfnamefont {J.}~\bibnamefont
  {Maldacena}},\ }\bibfield  {title} {\bibinfo {title} {The large n limit of
  superconformal field theories and supergravity},\ }\href
  {https://doi.org/10.1023/a:1026654312961} {\bibfield  {journal} {\bibinfo
  {journal} {International Journal of Theoretical Physics}\ }\textbf {\bibinfo
  {volume} {38}},\ \bibinfo {pages} {1113–1133} (\bibinfo {year} {1999})},\
  \Eprint {https://arxiv.org/abs/hep-th/9711200} {arXiv:hep-th/9711200
  [hep-th]} \BibitemShut {NoStop}%
\bibitem [{\citenamefont {Witten}(1998)}]{Witten1998}%
  \BibitemOpen
  \bibfield  {author} {\bibinfo {author} {\bibfnamefont {E.}~\bibnamefont
  {Witten}},\ }\bibfield  {title} {\bibinfo {title} {Anti de {S}itter space and
  holography},\ }\href {https://doi.org/10.4310/ATMP.1998.v2.n2.a2} {\bibfield
  {journal} {\bibinfo  {journal} {ATMP}\ }\textbf {\bibinfo {volume} {2}},\
  \bibinfo {pages} {253} (\bibinfo {year} {1998})}\BibitemShut {NoStop}%
\bibitem [{\citenamefont {{Bertini}}\ \emph {et~al.}(2025)\citenamefont
  {{Bertini}}, \citenamefont {{Claeys}},\ and\ \citenamefont
  {{Prosen}}}]{dual-unitary-review}%
  \BibitemOpen
  \bibfield  {author} {\bibinfo {author} {\bibfnamefont {B.}~\bibnamefont
  {{Bertini}}}, \bibinfo {author} {\bibfnamefont {P.~W.}\ \bibnamefont
  {{Claeys}}},\ and\ \bibinfo {author} {\bibfnamefont {T.}~\bibnamefont
  {{Prosen}}},\ }\bibfield  {title} {\bibinfo {title} {{Exactly solvable
  many-body dynamics from space-time duality}},\ }\bibfield  {journal}
  {\bibinfo  {journal} {arXiv e-prints}\ }\href
  {https://doi.org/10.48550/arXiv.2505.11489} {10.48550/arXiv.2505.11489}
  (\bibinfo {year} {2025}),\ \Eprint {https://arxiv.org/abs/2505.11489}
  {arXiv:2505.11489 [cond-mat.stat-mech]} \BibitemShut {NoStop}%
\bibitem [{\citenamefont {{Yu}}\ \emph {et~al.}(2024)\citenamefont {{Yu}},
  \citenamefont {{Cirac}}, \citenamefont {{Kos}},\ and\ \citenamefont
  {{Styliaris}}}]{du-peps}%
  \BibitemOpen
  \bibfield  {author} {\bibinfo {author} {\bibfnamefont {X.-H.}\ \bibnamefont
  {{Yu}}}, \bibinfo {author} {\bibfnamefont {J.~I.}\ \bibnamefont {{Cirac}}},
  \bibinfo {author} {\bibfnamefont {P.}~\bibnamefont {{Kos}}},\ and\ \bibinfo
  {author} {\bibfnamefont {G.}~\bibnamefont {{Styliaris}}},\ }\bibfield
  {title} {\bibinfo {title} {{Dual-Isometric Projected Entangled Pair
  States}},\ }\href {https://doi.org/10.1103/PhysRevLett.133.190401} {\bibfield
   {journal} {\bibinfo  {journal} {Phys. Rev. Lett.}\ }\textbf {\bibinfo
  {volume} {133}},\ \bibinfo {eid} {190401} (\bibinfo {year} {2024})},\ \Eprint
  {https://arxiv.org/abs/2404.16783} {arXiv:2404.16783 [quant-ph]} \BibitemShut
  {NoStop}%
\bibitem [{\citenamefont {{Mesty{\'a}n}}\ \emph {et~al.}(2024)\citenamefont
  {{Mesty{\'a}n}}, \citenamefont {{Pozsgay}},\ and\ \citenamefont
  {{Wanless}}}]{sajat-multi}%
  \BibitemOpen
  \bibfield  {author} {\bibinfo {author} {\bibfnamefont {M.}~\bibnamefont
  {{Mesty{\'a}n}}}, \bibinfo {author} {\bibfnamefont {B.}~\bibnamefont
  {{Pozsgay}}},\ and\ \bibinfo {author} {\bibfnamefont {I.~M.}\ \bibnamefont
  {{Wanless}}},\ }\bibfield  {title} {\bibinfo {title} {{Multi-directional
  unitarity and maximal entanglement in spatially symmetric quantum states}},\
  }\href {https://doi.org/10.21468/SciPostPhys.16.1.010} {\bibfield  {journal}
  {\bibinfo  {journal} {SciPost Phys.}\ }\textbf {\bibinfo {volume} {16}},\
  \bibinfo {eid} {010} (\bibinfo {year} {2024})},\ \Eprint
  {https://arxiv.org/abs/2210.13017} {arXiv:2210.13017 [quant-ph]} \BibitemShut
  {NoStop}%
\bibitem [{\citenamefont {{Rajchel-Mieldzio{\'c}}}\ \emph
  {et~al.}(2025)\citenamefont {{Rajchel-Mieldzio{\'c}}}, \citenamefont
  {{Bistro{\'n}}}, \citenamefont {{Rico}}, \citenamefont {{Lakshminarayan}},\
  and\ \citenamefont {{{\.Z}yczkowski}}}]{ame-review}%
  \BibitemOpen
  \bibfield  {author} {\bibinfo {author} {\bibfnamefont {G.}~\bibnamefont
  {{Rajchel-Mieldzio{\'c}}}}, \bibinfo {author} {\bibfnamefont
  {R.}~\bibnamefont {{Bistro{\'n}}}}, \bibinfo {author} {\bibfnamefont
  {A.}~\bibnamefont {{Rico}}}, \bibinfo {author} {\bibfnamefont
  {A.}~\bibnamefont {{Lakshminarayan}}},\ and\ \bibinfo {author} {\bibfnamefont
  {K.}~\bibnamefont {{{\.Z}yczkowski}}},\ }\bibfield  {title} {\bibinfo {title}
  {{Absolutely maximally entangled pure states of multipartite quantum
  systems}},\ }\bibfield  {journal} {\bibinfo  {journal} {arXiv e-prints}\
  }\href {https://doi.org/10.48550/arXiv.2508.04777}
  {10.48550/arXiv.2508.04777} (\bibinfo {year} {2025}),\ \Eprint
  {https://arxiv.org/abs/2508.04777} {arXiv:2508.04777 [quant-ph]} \BibitemShut
  {NoStop}%
\bibitem [{\citenamefont {{Aravinda}}\ \emph {et~al.}(2021)\citenamefont
  {{Aravinda}}, \citenamefont {{Rather}},\ and\ \citenamefont
  {{Lakshminarayan}}}]{dual-unitary--bernoulli}%
  \BibitemOpen
  \bibfield  {author} {\bibinfo {author} {\bibfnamefont {S.}~\bibnamefont
  {{Aravinda}}}, \bibinfo {author} {\bibfnamefont {S.~A.}\ \bibnamefont
  {{Rather}}},\ and\ \bibinfo {author} {\bibfnamefont {A.}~\bibnamefont
  {{Lakshminarayan}}},\ }\bibfield  {title} {\bibinfo {title} {{From
  dual-unitary to quantum Bernoulli circuits: Role of the entangling power in
  constructing a quantum ergodic hierarchy}},\ }\href
  {https://doi.org/10.1103/PhysRevResearch.3.043034} {\bibfield  {journal}
  {\bibinfo  {journal} {Phys. Rev. Res.}\ }\textbf {\bibinfo {volume} {3}},\
  \bibinfo {eid} {043034} (\bibinfo {year} {2021})},\ \Eprint
  {https://arxiv.org/abs/2101.04580} {arXiv:2101.04580 [quant-ph]} \BibitemShut
  {NoStop}%
\bibitem [{\citenamefont {Cao}\ \emph {et~al.}(2022)\citenamefont {Cao},
  \citenamefont {Pollack},\ and\ \citenamefont {Wang}}]{ChunJun2022}%
  \BibitemOpen
  \bibfield  {author} {\bibinfo {author} {\bibfnamefont {C.}~\bibnamefont
  {Cao}}, \bibinfo {author} {\bibfnamefont {J.}~\bibnamefont {Pollack}},\ and\
  \bibinfo {author} {\bibfnamefont {Y.}~\bibnamefont {Wang}},\ }\bibfield
  {title} {\bibinfo {title} {Hyperinvariant multiscale entanglement
  renormalization ansatz: Approximate holographic error correction codes with
  power-law correlations},\ }\href
  {https://doi.org/10.1103/physrevd.105.026018} {\bibfield  {journal} {\bibinfo
   {journal} {Phys. Rev. D}\ }\textbf {\bibinfo {volume} {105}},\ \bibinfo
  {pages} {026018} (\bibinfo {year} {2022})}\BibitemShut {NoStop}%
\bibitem [{\citenamefont {{Berger}}\ and\ \citenamefont
  {{Osborne}}(2018)}]{perfect-tangles}%
  \BibitemOpen
  \bibfield  {author} {\bibinfo {author} {\bibfnamefont {J.}~\bibnamefont
  {{Berger}}}\ and\ \bibinfo {author} {\bibfnamefont {T.~J.}\ \bibnamefont
  {{Osborne}}},\ }\bibfield  {title} {\bibinfo {title} {{Perfect tangles}},\
  }\href@noop {} {\bibfield  {journal} {\bibinfo  {journal} {arXiv e-prints}\ }
  (\bibinfo {year} {2018})},\ \Eprint {https://arxiv.org/abs/1804.03199}
  {arXiv:1804.03199 [quant-ph]} \BibitemShut {NoStop}%
\bibitem [{\citenamefont {Harris}\ \emph {et~al.}(2018)\citenamefont {Harris},
  \citenamefont {McMahon}, \citenamefont {Brennen},\ and\ \citenamefont
  {Stace}}]{block-perfect-tensor}%
  \BibitemOpen
  \bibfield  {author} {\bibinfo {author} {\bibfnamefont {R.~J.}\ \bibnamefont
  {Harris}}, \bibinfo {author} {\bibfnamefont {N.~A.}\ \bibnamefont {McMahon}},
  \bibinfo {author} {\bibfnamefont {G.~K.}\ \bibnamefont {Brennen}},\ and\
  \bibinfo {author} {\bibfnamefont {T.~M.}\ \bibnamefont {Stace}},\ }\bibfield
  {title} {\bibinfo {title} {Calderbank-{S}hor-{S}teane holographic quantum
  error-correcting codes},\ }\href {https://doi.org/10.1103/PhysRevA.98.052301}
  {\bibfield  {journal} {\bibinfo  {journal} {Phys. Rev. A}\ }\textbf {\bibinfo
  {volume} {98}},\ \bibinfo {pages} {052301} (\bibinfo {year} {2018})},\
  \Eprint {https://arxiv.org/abs/1806.06472} {arXiv:1806.06472 [quant-ph]}
  \BibitemShut {NoStop}%
\bibitem [{\citenamefont {{Wang}}(2021)}]{planar-OA}%
  \BibitemOpen
  \bibfield  {author} {\bibinfo {author} {\bibfnamefont {Y.-L.}\ \bibnamefont
  {{Wang}}},\ }\bibfield  {title} {\bibinfo {title} {{Planar k-uniform states:
  a generalization of planar maximally entangled states}},\ }\href
  {https://doi.org/10.1007/s11128-021-03204-y} {\bibfield  {journal} {\bibinfo
  {journal} {Quant. Inf. Proc.}\ }\textbf {\bibinfo {volume} {20}},\ \bibinfo
  {eid} {271} (\bibinfo {year} {2021})},\ \Eprint
  {https://arxiv.org/abs/2106.12209} {arXiv:2106.12209 [quant-ph]} \BibitemShut
  {NoStop}%
\bibitem [{\citenamefont {{Doroudiani}}\ and\ \citenamefont
  {{Karimipour}}(2020)}]{planar-AME}%
  \BibitemOpen
  \bibfield  {author} {\bibinfo {author} {\bibfnamefont {M.}~\bibnamefont
  {{Doroudiani}}}\ and\ \bibinfo {author} {\bibfnamefont {V.}~\bibnamefont
  {{Karimipour}}},\ }\bibfield  {title} {\bibinfo {title} {{Planar maximally
  entangled states}},\ }\href {https://doi.org/10.1103/PhysRevA.102.012427}
  {\bibfield  {journal} {\bibinfo  {journal} {Phys. Rev. A}\ }\textbf {\bibinfo
  {volume} {102}},\ \bibinfo {eid} {012427} (\bibinfo {year} {2020})},\ \Eprint
  {https://arxiv.org/abs/2004.00906} {arXiv:2004.00906 [quant-ph]} \BibitemShut
  {NoStop}%
\bibitem [{\citenamefont {{Evenbly}}(2017)}]{Evenbly-HyMera}%
  \BibitemOpen
  \bibfield  {author} {\bibinfo {author} {\bibfnamefont {G.}~\bibnamefont
  {{Evenbly}}},\ }\bibfield  {title} {\bibinfo {title} {{Hyperinvariant Tensor
  Networks and Holography}},\ }\href
  {https://doi.org/10.1103/PhysRevLett.119.141602} {\bibfield  {journal}
  {\bibinfo  {journal} {Phys. Rev. Lett.}\ }\textbf {\bibinfo {volume} {119}},\
  \bibinfo {eid} {141602} (\bibinfo {year} {2017})},\ \Eprint
  {https://arxiv.org/abs/1704.04229} {arXiv:1704.04229 [quant-ph]} \BibitemShut
  {NoStop}%
\bibitem [{\citenamefont {Hayden}\ \emph {et~al.}(2016)\citenamefont {Hayden},
  \citenamefont {Nezami}, \citenamefont {Qi}, \citenamefont {Thomas},
  \citenamefont {Walter},\ and\ \citenamefont {Yang}}]{Hayden:2016cfa}%
  \BibitemOpen
  \bibfield  {author} {\bibinfo {author} {\bibfnamefont {P.}~\bibnamefont
  {Hayden}}, \bibinfo {author} {\bibfnamefont {S.}~\bibnamefont {Nezami}},
  \bibinfo {author} {\bibfnamefont {X.-L.}\ \bibnamefont {Qi}}, \bibinfo
  {author} {\bibfnamefont {N.}~\bibnamefont {Thomas}}, \bibinfo {author}
  {\bibfnamefont {M.}~\bibnamefont {Walter}},\ and\ \bibinfo {author}
  {\bibfnamefont {Z.}~\bibnamefont {Yang}},\ }\bibfield  {title} {\bibinfo
  {title} {Holographic duality from random tensor networks},\ }\bibfield
  {journal} {\bibinfo  {journal} {Journal of High Energy Physics}\ }\textbf
  {\bibinfo {volume} {2016}},\ \href {https://doi.org/10.1007/jhep11(2016)009}
  {10.1007/jhep11(2016)009} (\bibinfo {year} {2016})\BibitemShut {NoStop}%
\bibitem [{\citenamefont {Qi}\ and\ \citenamefont {Yang}(2018)}]{Qi:2018shh}%
  \BibitemOpen
  \bibfield  {author} {\bibinfo {author} {\bibfnamefont {X.-L.}\ \bibnamefont
  {Qi}}\ and\ \bibinfo {author} {\bibfnamefont {Z.}~\bibnamefont {Yang}},\
  }\bibfield  {title} {\bibinfo {title} {{Space-time random tensor networks and
  holographic duality}},\ }\href@noop {} {\bibfield  {journal} {\bibinfo
  {journal} {arXiv preprint: 1801.05289}\ } (\bibinfo {year}
  {2018})}\BibitemShut {NoStop}%
\bibitem [{\citenamefont {Bao}\ \emph {et~al.}(2019)\citenamefont {Bao},
  \citenamefont {Penington}, \citenamefont {Sorce},\ and\ \citenamefont
  {Wall}}]{Bao:2018pvs}%
  \BibitemOpen
  \bibfield  {author} {\bibinfo {author} {\bibfnamefont {N.}~\bibnamefont
  {Bao}}, \bibinfo {author} {\bibfnamefont {G.}~\bibnamefont {Penington}},
  \bibinfo {author} {\bibfnamefont {J.}~\bibnamefont {Sorce}},\ and\ \bibinfo
  {author} {\bibfnamefont {A.~C.}\ \bibnamefont {Wall}},\ }\bibfield  {title}
  {\bibinfo {title} {Beyond toy models: distilling tensor networks in full
  {AdS/CFT}},\ }\href {https://doi.org/10.1007/jhep11(2019)069} {\bibfield
  {journal} {\bibinfo  {journal} {Journal of High Energy Physics}\ }\textbf
  {\bibinfo {volume} {11}},\ \bibinfo {pages} {69} (\bibinfo {year}
  {2019})}\BibitemShut {NoStop}%
\bibitem [{\citenamefont {Jahn}\ \emph {et~al.}(2019)\citenamefont {Jahn},
  \citenamefont {Gluza}, \citenamefont {Pastawski},\ and\ \citenamefont
  {Eisert}}]{Jahn2019_2}%
  \BibitemOpen
  \bibfield  {author} {\bibinfo {author} {\bibfnamefont {A.}~\bibnamefont
  {Jahn}}, \bibinfo {author} {\bibfnamefont {M.}~\bibnamefont {Gluza}},
  \bibinfo {author} {\bibfnamefont {F.}~\bibnamefont {Pastawski}},\ and\
  \bibinfo {author} {\bibfnamefont {J.}~\bibnamefont {Eisert}},\ }\bibfield
  {title} {\bibinfo {title} {Holography and criticality in matchgate tensor
  networks},\ }\href {https://doi.org/10.1126/sciadv.aaw0092} {\bibfield
  {journal} {\bibinfo  {journal} {Science Advances}\ }\textbf {\bibinfo
  {volume} {5}},\ \bibinfo {pages} {eaaw0092} (\bibinfo {year}
  {2019})}\BibitemShut {NoStop}%
\bibitem [{\citenamefont {Jahn}\ \emph
  {et~al.}(2022{\natexlab{a}})\citenamefont {Jahn}, \citenamefont
  {Zimbor{\'a}s},\ and\ \citenamefont {Eisert}}]{Jahn2022}%
  \BibitemOpen
  \bibfield  {author} {\bibinfo {author} {\bibfnamefont {A.}~\bibnamefont
  {Jahn}}, \bibinfo {author} {\bibfnamefont {Z.}~\bibnamefont {Zimbor{\'a}s}},\
  and\ \bibinfo {author} {\bibfnamefont {J.}~\bibnamefont {Eisert}},\
  }\bibfield  {title} {\bibinfo {title} {Tensor network models of {AdS/qCFT}},\
  }\href {https://doi.org/10.22331/q-2022-02-03-643} {\bibfield  {journal}
  {\bibinfo  {journal} {Quantum}\ }\textbf {\bibinfo {volume} {6}},\ \bibinfo
  {pages} {643} (\bibinfo {year} {2022}{\natexlab{a}})}\BibitemShut {NoStop}%
\bibitem [{\citenamefont {Jahn}\ \emph
  {et~al.}(2022{\natexlab{b}})\citenamefont {Jahn}, \citenamefont {Gluza},
  \citenamefont {Verhoeven}, \citenamefont {Singh},\ and\ \citenamefont
  {Eisert}}]{Jahn2022_1}%
  \BibitemOpen
  \bibfield  {author} {\bibinfo {author} {\bibfnamefont {A.}~\bibnamefont
  {Jahn}}, \bibinfo {author} {\bibfnamefont {M.}~\bibnamefont {Gluza}},
  \bibinfo {author} {\bibfnamefont {C.}~\bibnamefont {Verhoeven}}, \bibinfo
  {author} {\bibfnamefont {S.}~\bibnamefont {Singh}},\ and\ \bibinfo {author}
  {\bibfnamefont {J.}~\bibnamefont {Eisert}},\ }\bibfield  {title} {\bibinfo
  {title} {Boundary theories of critical matchgate tensor networks},\ }\href
  {https://doi.org/10.1007/jhep04(2022)111} {\bibfield  {journal} {\bibinfo
  {journal} {Journal of High Energy Physics}\ }\textbf {\bibinfo {volume}
  {2022}},\ \bibinfo {pages} {111} (\bibinfo {year} {2022}{\natexlab{b}})},\
  \Eprint {https://arxiv.org/abs/2110.02972} {arXiv:2110.02972 [quant-ph]}
  \BibitemShut {NoStop}%
\bibitem [{\citenamefont {Chen}\ \emph {et~al.}(2024)\citenamefont {Chen},
  \citenamefont {Hung}, \citenamefont {Jiang},\ and\ \citenamefont
  {Lao}}]{Chen:2024unp}%
  \BibitemOpen
  \bibfield  {author} {\bibinfo {author} {\bibfnamefont {L.}~\bibnamefont
  {Chen}}, \bibinfo {author} {\bibfnamefont {L.-Y.}\ \bibnamefont {Hung}},
  \bibinfo {author} {\bibfnamefont {Y.}~\bibnamefont {Jiang}},\ and\ \bibinfo
  {author} {\bibfnamefont {B.-X.}\ \bibnamefont {Lao}},\ }\bibfield  {title}
  {\bibinfo {title} {{Quantum 2D Liouville Path-Integral Is a Sum over
  Geometries in AdS$_3$ Einstein Gravity}},\ }\href@noop {} {\bibfield
  {journal} {\bibinfo  {journal} {arXiv preprint: 2403.03179}\ } (\bibinfo
  {year} {2024})}\BibitemShut {NoStop}%
\bibitem [{\citenamefont {{Bistro{\'n}}}\ \emph {et~al.}(2025)\citenamefont
  {{Bistro{\'n}}}, \citenamefont {{Hontarenko}},\ and\ \citenamefont
  {{{\.Z}yczkowski}}}]{rafal-karol-mykhailo--holocode}%
  \BibitemOpen
  \bibfield  {author} {\bibinfo {author} {\bibfnamefont {R.}~\bibnamefont
  {{Bistro{\'n}}}}, \bibinfo {author} {\bibfnamefont {M.}~\bibnamefont
  {{Hontarenko}}},\ and\ \bibinfo {author} {\bibfnamefont {K.}~\bibnamefont
  {{{\.Z}yczkowski}}},\ }\bibfield  {title} {\bibinfo {title} {Bulk-boundary
  correspondence from hyperinvariant tensor networks},\ }\href
  {https://doi.org/10.1103/PhysRevD.111.026006} {\bibfield  {journal} {\bibinfo
   {journal} {Phys. Rev. D}\ }\textbf {\bibinfo {volume} {111}},\ \bibinfo
  {pages} {026006} (\bibinfo {year} {2025})},\ \Eprint
  {https://arxiv.org/abs/2409.02029} {arXiv:2409.02029 [quant-ph]} \BibitemShut
  {NoStop}%
\bibitem [{\citenamefont {Helwig}(2013)}]{AME-graph}%
  \BibitemOpen
  \bibfield  {author} {\bibinfo {author} {\bibfnamefont {W.}~\bibnamefont
  {Helwig}},\ }\bibfield  {title} {\bibinfo {title} {Absolutely maximally
  entangled qudit graph states},\ }\href@noop {} {\bibfield  {journal}
  {\bibinfo  {journal} {arxiv e-prints}\ } (\bibinfo {year} {2013})},\ \Eprint
  {https://arxiv.org/abs/1306.2879} {arXiv:1306.2879 [quant-ph]} \BibitemShut
  {NoStop}%
\bibitem [{\citenamefont {{Facchi}}\ \emph {et~al.}(2008)\citenamefont
  {{Facchi}}, \citenamefont {{Florio}}, \citenamefont {{Parisi}},\ and\
  \citenamefont {{Pascazio}}}]{AME-1}%
  \BibitemOpen
  \bibfield  {author} {\bibinfo {author} {\bibfnamefont {P.}~\bibnamefont
  {{Facchi}}}, \bibinfo {author} {\bibfnamefont {G.}~\bibnamefont {{Florio}}},
  \bibinfo {author} {\bibfnamefont {G.}~\bibnamefont {{Parisi}}},\ and\
  \bibinfo {author} {\bibfnamefont {S.}~\bibnamefont {{Pascazio}}},\ }\bibfield
   {title} {\bibinfo {title} {{Maximally multipartite entangled states}},\
  }\href {https://doi.org/10.1103/PhysRevA.77.060304} {\bibfield  {journal}
  {\bibinfo  {journal} {Phys. Rev. A}\ }\textbf {\bibinfo {volume} {77}},\
  \bibinfo {eid} {060304} (\bibinfo {year} {2008})},\ \Eprint
  {https://arxiv.org/abs/0710.2868} {arXiv:0710.2868 [quant-ph]} \BibitemShut
  {NoStop}%
\bibitem [{\citenamefont {{Goyeneche}}\ \emph {et~al.}(2015)\citenamefont
  {{Goyeneche}}, \citenamefont {{Alsina}}, \citenamefont {{Latorre}},
  \citenamefont {{Riera}},\ and\ \citenamefont {{{\.Z}yczkowski}}}]{AMEcomb2}%
  \BibitemOpen
  \bibfield  {author} {\bibinfo {author} {\bibfnamefont {D.}~\bibnamefont
  {{Goyeneche}}}, \bibinfo {author} {\bibfnamefont {D.}~\bibnamefont
  {{Alsina}}}, \bibinfo {author} {\bibfnamefont {J.~I.}\ \bibnamefont
  {{Latorre}}}, \bibinfo {author} {\bibfnamefont {A.}~\bibnamefont {{Riera}}},\
  and\ \bibinfo {author} {\bibfnamefont {K.}~\bibnamefont {{{\.Z}yczkowski}}},\
  }\bibfield  {title} {\bibinfo {title} {{Absolutely maximally entangled
  states, combinatorial designs, and multiunitary matrices}},\ }\href
  {https://doi.org/10.1103/PhysRevA.92.032316} {\bibfield  {journal} {\bibinfo
  {journal} {Phys. Rev. A}\ }\textbf {\bibinfo {volume} {92}},\ \bibinfo {eid}
  {032316} (\bibinfo {year} {2015})},\ \Eprint
  {https://arxiv.org/abs/1506.08857} {arXiv:1506.08857 [quant-ph]} \BibitemShut
  {NoStop}%
\bibitem [{\citenamefont {Wang}(2021)}]{planar_k_uniform_states}%
  \BibitemOpen
  \bibfield  {author} {\bibinfo {author} {\bibfnamefont {Y.}~\bibnamefont
  {Wang}},\ }\bibfield  {title} {\bibinfo {title} {Planar k-uniform states: a
  generalization of planar maximally entangled states},\ }\href
  {https://doi.org/10.1007/S11128-021-03204-Y} {\bibfield  {journal} {\bibinfo
  {journal} {Quantum Inf. Process.}\ }\textbf {\bibinfo {volume} {20}},\
  \bibinfo {pages} {1} (\bibinfo {year} {2021})}\BibitemShut {NoStop}%
\bibitem [{\citenamefont {{Wolfram Research{,}
  Inc.}}()}]{MathematicaTensorsPaper}%
  \BibitemOpen
  \bibfield  {author} {\bibinfo {author} {\bibnamefont {{Wolfram Research{,}
  Inc.}}},\ }\href@noop {} {\bibinfo {title} {{Mathematica, {V}ersion
  12.1.1.0}}},\ \bibinfo {howpublished}
  {https://www.wolfram.com/mathematica}\BibitemShut {NoStop}%
\bibitem [{\citenamefont {Ramadas}\ and\ \citenamefont
  {Lakshminarayan}(2025)}]{arul-AME5d}%
  \BibitemOpen
  \bibfield  {author} {\bibinfo {author} {\bibfnamefont {N.}~\bibnamefont
  {Ramadas}}\ and\ \bibinfo {author} {\bibfnamefont {A.}~\bibnamefont
  {Lakshminarayan}},\ }\bibfield  {title} {\bibinfo {title} {Local unitary
  equivalence of absolutely maximally entangled states constructed from
  orthogonal arrays},\ }\href {https://doi.org/10.1088/1751-8121/adbf75}
  {\bibfield  {journal} {\bibinfo  {journal} {J. Phys. A: Math. Theor.}\
  }\textbf {\bibinfo {volume} {58}},\ \bibinfo {pages} {125301} (\bibinfo
  {year} {2025})},\ \Eprint {https://arxiv.org/abs/2411.04096}
  {arXiv:2411.04096 [quant-ph]} \BibitemShut {NoStop}%
\bibitem [{\citenamefont {Tan}(2025)}]{tan-AME52}%
  \BibitemOpen
  \bibfield  {author} {\bibinfo {author} {\bibfnamefont {I.}~\bibnamefont
  {Tan}},\ }\href {https://arxiv.org/abs/2507.02185} {\bibinfo {title}
  {Classification of four-qubit pure codes and five-qubit absolutely maximally
  entangled states}} (\bibinfo {year} {2025}),\ \Eprint
  {https://arxiv.org/abs/2507.02185} {arXiv:2507.02185 [quant-ph]} \BibitemShut
  {NoStop}%
\bibitem [{\citenamefont {{Huber}}\ \emph {et~al.}(2017)\citenamefont
  {{Huber}}, \citenamefont {{G{\"u}hne}},\ and\ \citenamefont
  {{Siewert}}}]{AME-7qbits-nogo}%
  \BibitemOpen
  \bibfield  {author} {\bibinfo {author} {\bibfnamefont {F.}~\bibnamefont
  {{Huber}}}, \bibinfo {author} {\bibfnamefont {O.}~\bibnamefont
  {{G{\"u}hne}}},\ and\ \bibinfo {author} {\bibfnamefont {J.}~\bibnamefont
  {{Siewert}}},\ }\bibfield  {title} {\bibinfo {title} {{Absolutely Maximally
  Entangled States of seven qubits do not exist}},\ }\href
  {https://doi.org/10.1103/PhysRevLett.118.200502} {\bibfield  {journal}
  {\bibinfo  {journal} {Phys. Rev. Lett.}\ }\textbf {\bibinfo {volume} {118}},\
  \bibinfo {eid} {200502} (\bibinfo {year} {2017})},\ \Eprint
  {https://arxiv.org/abs/1608.06228} {arXiv:1608.06228 [quant-ph]} \BibitemShut
  {NoStop}%
\bibitem [{\citenamefont {Jahn}\ \emph {et~al.}(2020)\citenamefont {Jahn},
  \citenamefont {Zimbor\'as},\ and\ \citenamefont
  {Eisert}}]{Central_charges_and_scaling}%
  \BibitemOpen
  \bibfield  {author} {\bibinfo {author} {\bibfnamefont {A.}~\bibnamefont
  {Jahn}}, \bibinfo {author} {\bibfnamefont {Z.}~\bibnamefont {Zimbor\'as}},\
  and\ \bibinfo {author} {\bibfnamefont {J.}~\bibnamefont {Eisert}},\
  }\bibfield  {title} {\bibinfo {title} {{Central charges of aperiodic
  holographic tensor network models}},\ }\href
  {https://doi.org/10.1103/PhysRevA.102.042407} {\bibfield  {journal} {\bibinfo
   {journal} {Phys. Rev. A}\ }\textbf {\bibinfo {volume} {102}},\ \bibinfo
  {pages} {042407} (\bibinfo {year} {2020})},\ \Eprint
  {https://arxiv.org/abs/1911.03485} {arXiv:1911.03485 [hep-th]} \BibitemShut
  {NoStop}%
\bibitem [{\citenamefont {{Boyle}}\ \emph {et~al.}(2020)\citenamefont
  {{Boyle}}, \citenamefont {{Dickens}},\ and\ \citenamefont
  {{Flicker}}}]{quasi1}%
  \BibitemOpen
  \bibfield  {author} {\bibinfo {author} {\bibfnamefont {L.}~\bibnamefont
  {{Boyle}}}, \bibinfo {author} {\bibfnamefont {M.}~\bibnamefont {{Dickens}}},\
  and\ \bibinfo {author} {\bibfnamefont {F.}~\bibnamefont {{Flicker}}},\
  }\bibfield  {title} {\bibinfo {title} {{Conformal Quasicrystals and
  Holography}},\ }\href {https://doi.org/10.1103/PhysRevX.10.011009} {\bibfield
   {journal} {\bibinfo  {journal} {Phys. Rev. X}\ }\textbf {\bibinfo {volume}
  {10}},\ \bibinfo {pages} {011009} (\bibinfo {year} {2020})},\ \Eprint
  {https://arxiv.org/abs/1805.02665} {arXiv:1805.02665 [hep-th]} \BibitemShut
  {NoStop}%
\bibitem [{\citenamefont {{Boyle}}\ and\ \citenamefont
  {{Kulp}}(2025)}]{quasi2}%
  \BibitemOpen
  \bibfield  {author} {\bibinfo {author} {\bibfnamefont {L.}~\bibnamefont
  {{Boyle}}}\ and\ \bibinfo {author} {\bibfnamefont {J.}~\bibnamefont
  {{Kulp}}},\ }\bibfield  {title} {\bibinfo {title} {{Holographic foliations:
  Self-similar quasicrystals from hyperbolic honeycombs}},\ }\href
  {https://doi.org/10.1103/PhysRevD.111.046001} {\bibfield  {journal} {\bibinfo
   {journal} {Phys. Rev. D}\ }\textbf {\bibinfo {volume} {111}},\ \bibinfo
  {eid} {046001} (\bibinfo {year} {2025})},\ \Eprint
  {https://arxiv.org/abs/2408.15316} {arXiv:2408.15316 [hep-th]} \BibitemShut
  {NoStop}%
\bibitem [{\citenamefont {{Borras}}\ \emph {et~al.}(2007)\citenamefont
  {{Borras}}, \citenamefont {{Plastino}}, \citenamefont {{Batle}},
  \citenamefont {{Zander}}, \citenamefont {{Casas}},\ and\ \citenamefont
  {{Plastino}}}]{borras2007}%
  \BibitemOpen
  \bibfield  {author} {\bibinfo {author} {\bibfnamefont {A.}~\bibnamefont
  {{Borras}}}, \bibinfo {author} {\bibfnamefont {A.~R.}\ \bibnamefont
  {{Plastino}}}, \bibinfo {author} {\bibfnamefont {J.}~\bibnamefont {{Batle}}},
  \bibinfo {author} {\bibfnamefont {C.}~\bibnamefont {{Zander}}}, \bibinfo
  {author} {\bibfnamefont {M.}~\bibnamefont {{Casas}}},\ and\ \bibinfo {author}
  {\bibfnamefont {A.}~\bibnamefont {{Plastino}}},\ }\bibfield  {title}
  {\bibinfo {title} {{Multiqubit systems: highly entangled states and
  entanglement distribution}},\ }\href
  {https://doi.org/10.1088/1751-8113/40/44/018} {\bibfield  {journal} {\bibinfo
   {journal} {Journal of Physics A Mathematical General}\ }\textbf {\bibinfo
  {volume} {40}},\ \bibinfo {pages} {13407} (\bibinfo {year} {2007})},\ \Eprint
  {https://arxiv.org/abs/0803.3979} {arXiv:0803.3979 [quant-ph]} \BibitemShut
  {NoStop}%
\bibitem [{\citenamefont {{Goyeneche}}\ and\ \citenamefont
  {{{\.Z}yczkowski}}(2014)}]{AMEcomb1}%
  \BibitemOpen
  \bibfield  {author} {\bibinfo {author} {\bibfnamefont {D.}~\bibnamefont
  {{Goyeneche}}}\ and\ \bibinfo {author} {\bibfnamefont {K.}~\bibnamefont
  {{{\.Z}yczkowski}}},\ }\bibfield  {title} {\bibinfo {title} {{Genuinely
  multipartite entangled states and orthogonal arrays}},\ }\href
  {https://doi.org/10.1103/PhysRevA.90.022316} {\bibfield  {journal} {\bibinfo
  {journal} {Phys. Rev. A}\ }\textbf {\bibinfo {volume} {90}},\ \bibinfo {eid}
  {022316} (\bibinfo {year} {2014})},\ \Eprint
  {https://arxiv.org/abs/1404.3586} {arXiv:1404.3586 [quant-ph]} \BibitemShut
  {NoStop}%
\bibitem [{\citenamefont {{Tapiador}}\ \emph {et~al.}(2009)\citenamefont
  {{Tapiador}}, \citenamefont {{Hernandez-Castro}}, \citenamefont {{Clark}},\
  and\ \citenamefont {{Stepney}}}]{tapiador2009}%
  \BibitemOpen
  \bibfield  {author} {\bibinfo {author} {\bibfnamefont {J.~E.}\ \bibnamefont
  {{Tapiador}}}, \bibinfo {author} {\bibfnamefont {J.~C.}\ \bibnamefont
  {{Hernandez-Castro}}}, \bibinfo {author} {\bibfnamefont {J.~A.}\ \bibnamefont
  {{Clark}}},\ and\ \bibinfo {author} {\bibfnamefont {S.}~\bibnamefont
  {{Stepney}}},\ }\bibfield  {title} {\bibinfo {title} {{Highly entangled
  multi-qubit states with simple algebraic structure}},\ }\href
  {https://doi.org/10.1088/1751-8113/42/41/415301} {\bibfield  {journal}
  {\bibinfo  {journal} {Journal of Physics A Mathematical General}\ }\textbf
  {\bibinfo {volume} {42}},\ \bibinfo {eid} {415301} (\bibinfo {year}
  {2009})},\ \Eprint {https://arxiv.org/abs/0904.3874} {arXiv:0904.3874
  [quant-ph]} \BibitemShut {NoStop}%
\bibitem [{\citenamefont {{Zha}}\ \emph {et~al.}(2013)\citenamefont {{Zha}},
  \citenamefont {{Yuan}},\ and\ \citenamefont {{Zhang}}}]{zha2013}%
  \BibitemOpen
  \bibfield  {author} {\bibinfo {author} {\bibfnamefont {X.}~\bibnamefont
  {{Zha}}}, \bibinfo {author} {\bibfnamefont {C.}~\bibnamefont {{Yuan}}},\ and\
  \bibinfo {author} {\bibfnamefont {Y.}~\bibnamefont {{Zhang}}},\ }\bibfield
  {title} {\bibinfo {title} {{Generalized criterion for a maximally multi-qubit
  entangled state}},\ }\href {https://doi.org/10.1088/1612-2011/10/4/045201}
  {\bibfield  {journal} {\bibinfo  {journal} {Laser Physics Letters}\ }\textbf
  {\bibinfo {volume} {10}},\ \bibinfo {eid} {045201} (\bibinfo {year}
  {2013})},\ \Eprint {https://arxiv.org/abs/1204.6340} {arXiv:1204.6340
  [quant-ph]} \BibitemShut {NoStop}%
\bibitem [{\citenamefont {Zhi}\ and\ \citenamefont {Hu}(2022)}]{Zhi:2022anp}%
  \BibitemOpen
  \bibfield  {author} {\bibinfo {author} {\bibfnamefont {P.}~\bibnamefont
  {Zhi}}\ and\ \bibinfo {author} {\bibfnamefont {Y.}~\bibnamefont {Hu}},\
  }\bibfield  {title} {\bibinfo {title} {{Construct maximally five- and
  seven-qubit entangled states via three- qubit GHZ state}},\ }\href
  {https://doi.org/10.1142/S0217979222502150} {\bibfield  {journal} {\bibinfo
  {journal} {Int. J. Mod. Phys. B}\ }\textbf {\bibinfo {volume} {36}},\
  \bibinfo {pages} {2250215} (\bibinfo {year} {2022})}\BibitemShut {NoStop}%
\bibitem [{\citenamefont {Trotta}\ \emph {et~al.}(2026)\citenamefont {Trotta},
  \citenamefont {Scarafile}, \citenamefont {Facchi}, \citenamefont {Magnifico},
  \citenamefont {Mariano}, \citenamefont {Parisi}, \citenamefont {Pascazio},\
  and\ \citenamefont {Życzkowski}}]{TSFMMPPZ_26}%
  \BibitemOpen
  \bibfield  {author} {\bibinfo {author} {\bibfnamefont {G.}~\bibnamefont
  {Trotta}}, \bibinfo {author} {\bibfnamefont {P.}~\bibnamefont {Scarafile}},
  \bibinfo {author} {\bibfnamefont {P.}~\bibnamefont {Facchi}}, \bibinfo
  {author} {\bibfnamefont {G.}~\bibnamefont {Magnifico}}, \bibinfo {author}
  {\bibfnamefont {A.}~\bibnamefont {Mariano}}, \bibinfo {author} {\bibfnamefont
  {G.}~\bibnamefont {Parisi}}, \bibinfo {author} {\bibfnamefont
  {S.}~\bibnamefont {Pascazio}},\ and\ \bibinfo {author} {\bibfnamefont
  {K.}~\bibnamefont {Życzkowski}},\ }\bibfield  {title} {\bibinfo {title}
  {Multipartite entanglement of random states of qubits},\ }\bibfield
  {journal} {\bibinfo  {journal} {arXiv preprint: 2605.10314}\ }\href
  {https://doi.org/10.48550/ARXIV.2605.10314} {10.48550/ARXIV.2605.10314}
  (\bibinfo {year} {2026})\BibitemShut {NoStop}%
\bibitem [{\citenamefont {{Milbradt}}\ \emph {et~al.}(2023)\citenamefont
  {{Milbradt}}, \citenamefont {{Scheller}}, \citenamefont {{A{\ss}mus}},\ and\
  \citenamefont {{Mendl}}}]{ternary-unitary}%
  \BibitemOpen
  \bibfield  {author} {\bibinfo {author} {\bibfnamefont {R.}~\bibnamefont
  {{Milbradt}}}, \bibinfo {author} {\bibfnamefont {L.}~\bibnamefont
  {{Scheller}}}, \bibinfo {author} {\bibfnamefont {C.}~\bibnamefont
  {{A{\ss}mus}}},\ and\ \bibinfo {author} {\bibfnamefont {C.~B.}\ \bibnamefont
  {{Mendl}}},\ }\bibfield  {title} {\bibinfo {title} {{Ternary unitary quantum
  lattice models and circuits in $2 + 1$ dimensions}},\ }\href
  {https://doi.org/10.1103/PhysRevLett.130.090601} {\bibfield  {journal}
  {\bibinfo  {journal} {Phys. Rev. Lett.}\ }\textbf {\bibinfo {volume} {130}},\
  \bibinfo {pages} {090601} (\bibinfo {year} {2023})},\ \Eprint
  {https://arxiv.org/abs/2206.01499} {arXiv:2206.01499 [cond-mat.stat-mech]}
  \BibitemShut {NoStop}%
\bibitem [{\citenamefont {M~Rossi}\ and\ \citenamefont
  {Macchiavello}(2013)}]{Rossi13}%
  \BibitemOpen
  \bibfield  {author} {\bibinfo {author} {\bibfnamefont {D.~B.}\ \bibnamefont
  {M~Rossi}, \bibfnamefont {M~Huber}}\ and\ \bibinfo {author} {\bibfnamefont
  {C.}~\bibnamefont {Macchiavello}},\ }\bibfield  {title} {\bibinfo {title}
  {Quantum hypergraph states},\ }\href
  {https://doi.org/10.1088/1367-2630/15/11/113022} {\bibfield  {journal}
  {\bibinfo  {journal} {New J. Phys.}\ }\textbf {\bibinfo {volume} {15}},\
  \bibinfo {pages} {113022} (\bibinfo {year} {2013})},\ \Eprint
  {https://arxiv.org/abs/1211.5554} {arXiv:1211.5554 [quant-ph]} \BibitemShut
  {NoStop}%
\end{thebibliography}

%

\end{document}